\documentclass[twocolumn,aps,pra,superscriptaddress,nofootinbib,notitlepage,showpacs,floatfix,longbibliography]{revtex4-2}
\usepackage{graphicx} 
\usepackage{amsmath,amssymb,amsthm,dsfont,physics}
\usepackage[x11names]{xcolor}
\usepackage{mathtools}
\usepackage{braket}
\usepackage{enumitem}
\usepackage{amsthm}
\usepackage{empheq}
\usepackage{subfigure}
\usepackage{cancel}
\usepackage{bm}
\usepackage{mleftright}
\usepackage[normalem]{ulem}
\usepackage{relsize}
\usepackage[pdfstartview=FitH]{hyperref}
\hypersetup{
    colorlinks=true,       
    linkcolor=red,          
    citecolor=magenta,        
    filecolor=magenta,      
    urlcolor=red,           
    runcolor= blue
}
\usepackage[ruled]{algorithm}
\usepackage{algpseudocode}
\usepackage{multirow}
\usepackage{booktabs}
\usepackage{hhline}
\usepackage{tabularx}
\usepackage{adjustbox}
\usepackage[margin=1in,a4paper]{geometry}  
\usepackage{array,booktabs,ragged2e}
\newcolumntype{R}[1]{>{\RaggedLeft\arraybackslash}p{#1}}
\newcommand{\specialcelll}[2][c]{%
  \begin{tabular}[#1]{@{}l@{}}#2\end{tabular}}

\makeatletter
\renewcommand*\env@matrix[1][*\c@MaxMatrixCols c]{%
  \hskip -\arraycolsep
  \let\@ifnextchar\new@ifnextchar
  \array{#1}}
\makeatother

\newtheorem{theorem}{Theorem}
\newtheorem*{theorem*}{Theorem}
\newtheorem{corollary}{Corollary}[theorem]

\newtheorem{definition}{Definition}

\newtheoremstyle{problemstyle}  
        {3pt}                                               
        {3pt}                                               
        {\normalfont}                               
        {}                                                  
        {\bfseries\itshape}                 
        {\normalfont\bfseries:}         
        {.5em}                                          
        {}                                                  
\theoremstyle{problemstyle}


\definecolor{beaublue}{rgb}{0.74, 0.83, 0.9}
\definecolor{non-photoblue}{rgb}{0.64, 0.87, 0.93}
\definecolor{paleaqua}{rgb}{0.74, 0.83, 0.9}
\definecolor{aliceblue}{rgb}{0.93, 0.97, 1.0}
\colorlet{lcfree}{Green3}
\colorlet{lcnorm}{Blue3}
\colorlet{lccong}{Red3}

\newcommand{\sket}[1]{| #1 \rangle}
  
\newcommand{\bsf}[1]{\bm{\mathsf{#1}}}
\newcommand{\bmsf}[1]{\bm{\mathsf{#1}}}
\newcommand{\bcal}[1]{\bm{\mathcal{#1}}}

\pdfstringdefDisableCommands{%
    \renewcommand*{\bm}[1]{#1}%
}
\newcommand{\vertiii}[1]{{\| #1\|}}
\DeclarePairedDelimiter\pars{\lparen}{\rparen}
\usepackage{tikz}
\usepackage{tikz-cd}
\usepackage{verbatim}
\usepackage{hf-tikz}
\usepackage{adjustbox}
\usetikzlibrary{shapes,positioning,arrows,chains,fit,backgrounds,shapes.misc}
\usetikzlibrary{patterns,tikzmark}
\usetikzlibrary{matrix,decorations.pathreplacing,calc}
\usetikzlibrary{positioning, shapes.misc}
\pgfkeys{tikz/mymatrixenv/.style={decoration={brace},every left delimiter/.style={xshift=8pt},every right delimiter/.style={xshift=-8pt}}}
\pgfkeys{tikz/mymatrix/.style={matrix of math nodes,nodes in empty cells,left delimiter={[},right delimiter={]},inner sep=1pt,outer sep=1.5pt,column sep=8pt,row sep=8pt,nodes={minimum width=20pt,minimum height=10pt,anchor=center,inner sep=0pt,outer sep=0pt}}}
\pgfkeys{tikz/mymatrixbrace/.style={decorate,thick}}

\tikzset{style green/.style={
    set fill color=green!50!lime!60,draw opacity=0.4,
    set border color=green!50!lime!60,fill opacity=0.1,
  },
  style cyan/.style={
    set fill color=cyan!90!blue!60, draw opacity=0.4,
    set border color=blue!70!cyan!30,fill opacity=0.1,
  },
  style orange/.style={
    set fill color=orange!90, draw opacity=0.8,
    set border color=orange!90, fill opacity=0.3,
  },
  style brown/.style={
    set fill color=brown!70!orange!40, draw opacity=0.4,
    set border color=brown, fill opacity=0.3,
  },
  style purple/.style={
    set fill color=violet!90!pink!20, draw opacity=0.5,
    set border color=violet, fill opacity=0.3,    
  },
  kwad/.style={
    above left offset={-0.1,0.23},
    below right offset={0.10,-0.36},
    #1
  },
  pion/.style={
    above left offset={-0.07,0.2},
    below right offset={0.07,-0.32},
    #1
  },
  poz/.style={
    above left offset={-0.03,0.18},
    below right offset={0.03,-0.3},
    #1
  },set fill color/.code={\pgfkeysalso{fill=#1}},
  set border color/.style={draw=#1}
}

\begin{document}
\title{Analog classical simulation of closed quantum systems}
\author{Ka-Wa Yip}
\email{yipkawa@gmail.com}
\thanks{Now at Northeastern University.}
\affiliation{Zhejiang Lab, Hangzhou, 311121, China}
\begin{abstract}
We develop an analog classical simulation algorithm of noiseless quantum dynamics. By formulating the Schr\"{o}dinger equation into a linear system of real-valued ordinary differential equations (ODEs), the probability amplitudes of a complex state vector can be encoded in the continuous physical variables of an analog computer. Our algorithm reveals the full dynamics of complex probability amplitudes. Such real-time simulation is impossible in quantum simulation approaches without collapsing the state vector, and it is relatively computationally expensive for digital classical computers. 
For a real symmetric time-independent Hamiltonian, the ODEs may be solved by a simple analog mechanical device such as a one-dimensional spring-mass system. 
Since the underlying dynamics of quantum computers is governed by the Schr\"{o}dinger equation, our findings imply that analog computers can also perform quantum algorithms. We illustrate how to simulate the Schr\"{o}dinger equation in such a paradigm, with an application to  quantum approximate optimization algorithm. This may pave the way to emulate quantum algorithms with physical computing devices, including analog, continuous-time circuits. 
\end{abstract}
\pacs{}
\maketitle
\section{Introduction}
Quantum simulation~\cite{RevModPhys.86.153,wiebe2011simulating,babbush2019quantum,cirac2012goals} of quantum systems has gained popularity over the last decade. 
The purpose of quantum simulation (also referred to as Hamiltonian simulation) is to use a quantum computer or simulator to approximate or simulate the exact quantum dynamics under the target quantum Hamiltonian. The target Hamiltonian\footnote{Henceforth, we will refer to quantum Hamiltonian, instead of classical Hamiltonian, as the Hamiltonian.} spans systems of interest across a range of fields from quantum chemistry~\cite{lidar1999} and many-body physics~\cite{raeisi2012} to high energy physics~\cite{preskill2018quantum}. There are two general quantum simulation approaches. One is digital quantum simulation (DQS), where a quantum circuit emulates the unitary evolution of a quantum system~\cite{lloyd1996universal}. The unitary operator is approximated using a sequence of quantum gates, employing techniques such as Trotterization~\cite{wiebe2011simulating, trotter_theory, aharonov2003adiabatic}, Taylor series truncation~\cite{berry2015simulating}, and qubitization~\cite{low2019hamiltonian}. DQS is well-suited for simulating local Hamiltonians. The second is analog quantum simulation (AQS), which simulates the target Hamiltonian by another~\cite{aharonov_et_al:LIPIcs:2018:10095}. This involves mapping the target Hamiltonian to the Hamiltonian in a well-controllable quantum simulator~\cite{probing51,manybody53}. For example, one can simulate the Hamiltonian dynamics of the Hubbard model with fermionic cold atoms in optical field~\cite{daley2022practical}. It is also a wide belief that quantum computers are tailor-made to simulate highly entangled quantum systems~\cite{preskill2018quantum}. 

Typically, discussions on classical computing machinery gravitate toward digital classical computers, which work with binary bits (0/1). 
The application of digital classical computers to simulate quantum systems is commonly referred to as digital classical simulation (DCS). 
In this digital approach, calculating the unitary operator $e^{-iHt}$ or solving the Schr\"{o}dinger equation over a time interval can be computationally expensive. 
Often, the challenges arise from the poor scalability of matrix exponentiation of the Hamiltonian and the use of numerical techniques for ordinary differential equations (ODEs), like Runge-Kutta methods, as the system size increases. 
While a vast literature exists on the topics of quantum simulations by digital and analog quantum computers, as well as by digital classical computers, much less is known for the topic of quantum simulations by analog classical computers. There is a gap in research regarding the capability of analog classical computers, i.e., devices that handle continuous, real-valued physical quantities without depending on quantum mechanics, to simulate quantum systems. The efficacy of such analog classical computers in mimicking quantum dynamics is also not well understood.
Driven by this gap in knowledge, we propose and investigate an alternative approach of using classical physical systems to simulate quantum systems: the analog classical simulation (ACS). 
Operating as a continuous time machine that mirrors natural physical processes~\cite{bournez2020computability, bournez2021survey,bournez2008survey}, the analog classical computer, and more broadly, the mathematical framework of the General Purpose Analog Computer (GPAC)~\cite{shannon1941mathematical}, can hypothetically sidestep some of the computational complexities associated with DCS, such as numerical integration of ODEs.

Our main contribution is an analog classical simulation algorithm for the Schr\"{o}dinger equation, which governs the dynamics of a closed quantum system. The algorithm leads to multiple discussions on the intricacies of its realization and the computational implications. We first detail a mapping from the forward problem of a time-dependent Schr\"{o}dinger equation to that of a system of real-valued ODEs. A time-dependent Schr\"{o}dinger equation is shown to be equivalent to two different systems of real-valued ODEs. We delve into one of them and illustrate how the Hamiltonian properties, such as time-dependence, can influence the structure of the corresponding system of ODEs. The global stability of the real ODEs can be proven to stem from the Hermitian nature of the Hamiltonian. 

Given the Hamiltonian and an initial state as inputs, the solutions to the real ODEs correspond to the probability amplitudes in a standard basis, e.g., computational,  without extra complexity in basis change. Therefore, by solving the real ODEs, the analog classical computer deterministically reveals the dynamics of quantum state's probability amplitudes. This feature of ACS stands out, particularly because DQS and AQS require numerous measurements (state vector collapses) and restarts to estimate the probability amplitudes at a given time. This allows a simulation longer than the decoherence timescale ($T_1/T_2$ time) posted by the quantum system, and a linear fast-forwarding of the quantum dynamics. Since the probabilities are known, one can also subsequently mimic quantum measurements to generate the output bitstrings or calculate the expectation values of quantum operators.
The tunability of parameters in an analog device also allows for simulations of general time-dependent Hamiltonians.
To bypass a detailed exploration of the engineering particulars of analog computing, we mostly illustrate its principles with a simple analog system comprised of two mechanical components: springs and masses.

A quantum algorithm that 
simulates the coupled classical oscillators was proposed~\cite{babbush2023exponential}. 
Although it primarily outputs macroscopic physical properties such as kinetic energy, rather than the full dynamics of the spring-mass system, the application of a quantum computer to simulate a classical mechanical system has profound implications. 
The underlying technique is the mapping of the classical dynamics governed by Newton's equation to the quantum dynamics. In contrast, our work employs encodings that translate a general time-dependent Schr\"{o}dinger equation into real ODEs, with the spring-mass system described by Newton's laws as a special case.

Although a simple spring-mass setup operates in continuous time and continuous space, its mechanical design poses certain limitations: positive masses and positive spring constants,  making it fundamentally a subset of GPAC. 
Still, it can simulate the Schrödinger equation for a large fraction of Hamiltonians. 
We introduce different encoding schemes of the spring mass system. By adjusting the spring constants, we show that a linear fast forwarding (speedup) of the dynamics is possible.

It is known that simulating time-independent Schr\"{o}dinger equation allows one to solve problems like glued trees~\cite{babbush2023exponential} and unstructured search~\cite{grover2002classical,patel2007coupled}. 
We also show that our analog classical simulation techniques enable the execution of quantum optimization algorithms such as quantum approximate optimization algorithm (QAOA), on an analog computer. 
Additionally, these real-valued ODEs are intrinsically connected to the dynamics of other physical computing methods, such as optical and neural computation~\cite{hasler2016opportunities, yon1958computer, mead1990neuromorphic, mead1989analog}. This paves the way to emulate quantum algorithms on other physical computing devices.

In Sec.~\ref{sec:pf}, we give an overview of the analog computer and formulate the simulation problem of general real-valued ODEs. In Sec.~\ref{sec:mapping} we unravel the Schr\"{o}dinger equation into a system of real ODEs, and illustrate how an analog computer can take advantage of this form of real ODEs to perform simulation.  In Sec.~\ref{sec:closed_algo} we provide an algorithmic implementation of analog classical simulation for a closed quantum system. 
In Sec.~\ref{sec:properties}, we delve into how different Hamiltonian structures successively lead to different properties of real ODEs, i.e., the Schr\"{o}dinger equation in real form. This serves as the foundation for the rest of the paper, outlining how to determine the simulator's capabilities.
In Sec.~\ref{sec:real_2nd_lds}, we focus on a particular analog simulator --- a second-order linear dynamical system --- 
and show that it can simulate any time-independent Hamiltonian. If the Hamiltonian is also real symmetric, the linear dynamical system reduces to two decoupled subsystems. Then in Sec.~\ref{sec:springmass} we focus on a 1-D spring-mass system, a specific subset of the second-order linear dynamical system, where the stiffness matrix~\cite{gladwell_book} has certain constraints posed by the positive spring constants. We present multiple schemes to encode Hamiltonians into this simple system.
In Sec.~\ref{sec:speedup}, we show how the scaling of spring constants bring forth the fast-forwarding of dynamics, relative to the inherent quantum timescale. We next show in Sec.~\ref{sec:qaoa} how our ACS approach can be applied to the quantum algorithm, with QAOA as an example. Additional technical details and proofs can be found in the Appendices.

\section{Problem formulation}
\label{sec:pf}
Many physical phenomena, such as electrical and mechanical processes, are described by ODEs over real numbers. An analog computer is designed to solve problems modeled by these differential equations, using the continuous dynamical properties of physical components that operate on similar working principles~\cite{siegelmann1998analog}.

Claude Shannon proposed the General-Purpose Analog Computer (GPAC)~\cite{shannon1941mathematical,bournez2006general} as a mathematical abstraction of the differential analyzer~\cite{thomson1876vi}, which is itself a mechanical analog computer. The GPAC can be presented in terms of polynomial ordinary differential equations~\cite{bournez2016computing}. It can also be regarded as the continuous-time counterparts of Turing machines, which are models for discrete time, in many respects~\cite{bournez2021survey}. For the purpose of simulating closed quantum systems, in general we want to use an analog classical computer or a GPAC to solve the following linear system of second-order real-valued ODEs:
\begin{equation}
\label{eq:2ndorderODE_1}
    \bm{\mathsf{A}}(t)\ddot{\pmb{y}}(t) + \bm{\mathsf{B}}(t)\dot{\pmb{y}}(t) + \bm{\mathsf{C}}(t)\pmb{y}(t) = \pmb{d}(t) \,.
\end{equation}
The overdot denotes derivative with respect to time and the bold symbol indicates that the object is a matrix or a vector. All components of Eq.~\ref{eq:2ndorderODE_1} are real. For generality, we also allow them to be time-dependent. By simulating Eq.~\ref{eq:2ndorderODE_1} from $t=0$ to $t=T$, with the initial conditions $\pmb{y}(0)$ and $\dot{\pmb{y}}(0)$, one can obtain information about the second-order dynamic system as defined by the equation at $t=T$. 

\section{Mapping the Schr\"{o}dinger equation to a linear system of real-valued ODEs}
\label{sec:mapping}
The Schr\"{o}dinger equation, in the form commonly encountered in quantum computing with a time-dependent Hamiltonian $\bm{\mathsf{H}}(t)$ is
\begin{align}
\label{eq:Schr}
 \ket{\dot{\psi}(t)} &= -i\bsf{H}(t) \sket{\psi(t)} \,,
\end{align}
where $\ket{\psi(t)} \in \mathbb C^{N\times 1}$ is the state of 
a quantum system at time $t$, and $N = 2^n$ for $n$ qubits. The norm $\|\psi(t)\| = 1$, for all $t \geq 0$. In Eq.~\ref{eq:Schr} we have set the reduced Planck constant $\hbar = 1$. Starting at time $t=0$, the state at time $t=T$ is given by
\begin{equation}
\ket{\psi(0+T)} = \bsf{U}(T, 0)\ket{\psi(0)} \,,
\end{equation}
where 
\begin{equation}
\label{eq:unitary}
\bsf{U}(T, 0) = {\cal T}\exp\left[-i\int_{0}^{T}\bsf{H}(t')dt'\right] \,.
\end{equation}
The time-ordering operator for the matrix exponential is denoted by $\cal T$. One main bottleneck of digital classical simulation for quantum systems is that the unitary $\bsf{U}(T, 0)$ is hard to compute. Note that Eq.~\ref{eq:Schr} is a linear system of complex-valued ODE. 

Our initial step involves converting the Schr\"{o}dinger equation (Eq.~\ref{eq:Schr}) into a system of homogeneous real-valued ODEs, in the form of Eq.~\ref{eq:2ndorderODE_1}. To accomplish this, there are two strategies. One converts the equation into a system of first-order real-valued ODEs, with the procedure provided in the Appendix~\ref{app:firstorder}. The other one, which we will detail here, involves transforming it into a system of second-order real-valued ODEs. 

Let us differentiate the Schr\"{o}dinger equation one more time. 
Starting from Eq.~\ref{eq:Schr}, we have
\begin{align}
\label{eq:schrodinger}
 \ket{\ddot{\psi}(t)}  &= -i {\dot{\bm{\mathsf{H}}}(t)} \sket{\psi(t)} -i{\bm{\mathsf{H}}(t)} \sket{\dot{\psi}(t)} \nonumber\\
 &= -i {\dot{\bm{\mathsf{H}}}}(t) \ket{\psi(t)} -i{\bm{\mathsf{H}}}(t) (-i {\bm{\mathsf{H}}(t)} \ket{\psi(t)}) \nonumber\\
 &= -i {\dot{\bm{\mathsf{H}}}}(t) \ket{\psi(t)} - \bm{\mathsf{H}}(t)^2 \ket{\psi(t)} \,.
\end{align}
Define $\bm{\mathsf{K}}(t) = i\dot{\bm{\mathsf{H}}}(t) + \bm{\mathsf{H}}(t)^2$; we then have 
\begin{equation}
\label{eq:2ndschro}
\ket{\ddot{\psi}(t)} = - \bm{\mathsf{K}}(t)|\psi(t)\rangle \,.   
\end{equation}

Eq.~\ref{eq:2ndschro} bears resemblance to the so-called second-order ``real"
Schr\"odinger equation~\cite{schrodinger2003collected,chen1989derivation,chen1991concerning,keepingitreal}, which is the first-published version of the Schr\"odinger equation~\cite{schrodinger2003collected}. However, in these references (see, for example, Eq.~1 of~\cite{keepingitreal}), the ``real" Schr\"odinger equation has $\bm{\mathsf{K}}(t)$ as a time-independent matrix with only real entries. Moreover, for many reasons (e.g.,~\cite{yang2013square,hestenes1993kinematic}), the current version of the Schr\"odinger equation has been extended into the complex space --- i.e., $\bm{\mathsf{H}}(t)$ and $\sket{\psi(t)}$ can be complex. Therefore, Eq.~\ref{eq:2ndschro} is not the historically ``real" Schr\"odinger equation. 

If $\bsf{H}(t)$ is time-independent, i.e., $\dot{\bsf{H}}(t) = 0$ for all time $t$, then the additional differentiation removes the imaginary unit $i$ prefactor of the Hamiltonian in Eq.~\ref{eq:Schr}. 
Effectively, $\bsf{K}$ is Hermitian\footnote{In this manuscript, if time dependence is not explicitly indicated, then it is understood that the operators are time-independent.}, a fact that will be valuable in later discussions. 

\subsection{Initial Value Problem}
\label{sec:eqn_equiv}
Every solution to the Schr\"odinger equation (Eq.~\ref{eq:Schr}) satisfies Eq.~\ref{eq:2ndschro}. On the other hand, the solutions to Eq.~\ref{eq:2ndschro} are the same as those of the Eq.~\ref{eq:Schr}, if the initial values $\ket{\psi(0)}$ and $\ket{\dot{\psi}(0)}$ for the two equations coincide.
Therefore, the corresponding initial value problem (IVP) of Eq.~\ref{eq:2ndschro} is as follows:
\begin{equation}
\label{eq:ivp}
\begin{cases}
\ket{\ddot{\psi}(t)} = - \bmsf{K}(t)\sket{\psi(t)} \,;\\
\ket{\dot{\psi}(0)} = - i\bsf{H}(0)\sket{\psi(0)} \,.
\end{cases}
\end{equation}
A detailed derivation is given in Appendix~\ref{app:proof_of_equivalence}.

Hence we only need to compute the Schr\"odinger equation (Eq.~\ref{eq:Schr}) once at the beginning of simulation. 

\subsection{Decomplexification}
We are in a position to turn the IVP  Eq.~\ref{eq:ivp} real. This process involves moving to a higher dimension, specifically from $N=2^n$ to $2N=2^{n+1}$.
Let $\Re{\bm{\mathsf{K}}(t)}$ and $\Im{\bm{\mathsf{K}}(t)}$ denote the real and imaginary parts, respectively, of $\bm{\mathsf{K}}(t)$ at time $t$, i.e., \[\bsf{K}(t) = \Re{\bm{\mathsf{K}}(t)} + i\Im{\bm{\mathsf{K}}(t)}.\] 
Define $\bcal{K}(t)$ as a matrix in $\mathbb{R}^{2N \times 2N}$, obtained by transforming $\bsf{K}(t)$ through a 
mapping as follows:
\begin{equation}
\bsf{K}(t) \mapsto
\bcal{K}(t) = 
\left[\begin{array}{cc}
\Re{\bm{\mathsf{K}}(t)} & -\Im{\bm{\mathsf{K}}(t)} \\
\Im{\bm{\mathsf{K}}(t)} & \Re{\bm{\mathsf{K}}(t)}
\end{array}\right] \,.
\end{equation}
The block matrix $\bcal{K}(t)$ is a real representation of $\bsf{K}(t)$~\cite{horn2012matrix}.
To complete the decomplexification of Eq.~\ref{eq:2ndschro}, we also turn the state vector $\ket{\psi(t)}$ in $\mathbb{C}^{N \times 1}$ to a real state vector $\ket{\varphi(t)}$ in $\mathbb{R}^{2N \times 1}$:
\begin{equation}
    \ket{\varphi(t)} = \begin{bmatrix} \Re{\ket{\psi(t)}} \\ \Im{\ket{\psi(t)}} 
\end{bmatrix}\,,
\end{equation}
where $\Re{\ket{\psi(t)}}$ and $\Im{\ket{\psi(t)}}$ are the real and imaginary parts of the state vector $\ket{\psi(t)}$ at time $t$, respectively. Eq.~\ref{eq:2ndschro} can be turned into the following real form:
\begin{equation}
\label{eq:realform}
    \sket{\ddot{\varphi}(t)} = -\bcal{K}(t)\sket{\varphi(t)}\,.
\end{equation}
An illustration of different approaches to compute $\bcal{K}(t)$ and a short proof of the real representation are given in Appendix~\ref{app:square} and~\ref{app:proof_of_equivalence_real}.

Likewise, the second equation of the IVP (Eq.~\ref{eq:ivp}) can be ``decomplexified" as below:
\begin{align}
\label{Eq:Schr_real1st_initial}
\ket{\dot{\varphi}(0)}
&=-\begin{bmatrix}
\Re{i\bsf{H}(0)} & -\Im{i\bsf{H}(0)} \\
\Im{i\bsf{H}(0)} & \Re{i\bsf{H}(0)}
\end{bmatrix}\sket{\varphi(0)} \nonumber\\
&= -\begin{bmatrix}
-\Im{\bsf{H}(0)} & -\Re{\bsf{H}(0)} \\
\Re{\bsf{H}(0)} & -\Im{\bsf{H}(0)}
\end{bmatrix}\sket{\varphi(0)}\,. \nonumber\\
\end{align}
Therefore, Eq.~\ref{eq:ivp} can be turned into the following real form:
\begin{equation}
\label{eq:ivp_real}
\begin{cases}
\sket{\ddot{\varphi}(t)} = -\bcal{K}(t)\sket{\varphi(t)} \,;\\
\ket{\dot{\varphi}(0)} = 
\small-\hspace{-0.5ex}\begin{bmatrix}
-\Im{\bsf{H}(0)} & \hspace{-0.8ex}-\Re{\bsf{H}(0)} \\
\Re{\bsf{H}(0)} & \hspace{-0.8ex} -\Im{\bsf{H}(0)}
\end{bmatrix}
\hspace{-1ex}
\begin{bmatrix}
\Re{\ket{\psi(0)}} \\
\Im{\ket{\psi(0)}}
\end{bmatrix}.
\end{cases}
\end{equation}

\section{Analog classical simulation algorithm for closed quantum system}
\label{sec:closed_algo}
We summarize Sec.~\ref{sec:mapping} by illustrating the analog classical algorithm for the Schr\"odinger equation in Alg.~\ref{alg:sim1}. The goal is to simulate the Schr\"{o}dinger equation for a duration of time $T$. The algorithm takes as inputs the initial state and the Hamiltonian; if the Hamiltonian is time-dependent, it requires its values at all times.

\begin{algorithm}[H]
\caption{\label{alg:sim1} Analog classical simulation for the Schr\"{o}dinger equation}\vspace{1.5pt}
\textbf{Input} $\ket{\psi(0)}$, $\bsf{H}(t) \,\,\,\,\, \forall t \in [0, T]$ \\[0pt]
\begin{algorithmic}[1]
\State Convert $\ket{\psi(0)}$ to  $\ket{\varphi(0)} = \begin{bmatrix} 
\Re{\ket{\psi(0)}} \\ 
\Im{\ket{\psi(0)}} 
\end{bmatrix}$ 
\State Set $\ket{\dot{\varphi}(0)}$ to \vspace{1.3pt} 
\newline 
$\small\begin{bmatrix}\Im{\bmsf{H}(0)}\Re{\ket{\psi(0)}} + \Re{\bmsf{H}(0)}\Im{\ket{\psi(0)}}\\
-\Re{\bmsf{H}(0)}\Re{\ket{\psi(0)}} + \Im{\bmsf{H}(0)}\Im{\ket{\psi(0)}} 
\end{bmatrix}$
\vspace{1pt}
\State Encode the initial configuration of the analog computer with $\ket{\varphi(0)}$ and $\ket{\dot{\varphi}(0)}$
\State Calculate $\bsf{K}(t) = i\dot{\bm{\mathsf{H}}}(t) + \bm{\mathsf{H}}(t)^2$
\vspace{1pt}
\State Map $\bsf{K}(t)$ to $\bcal{K}(t) = \left[\begin{array}{cc}
\Re{\bm{\mathsf{K}}(t)} & -\Im{\bm{\mathsf{K}}(t)} \\
\Im{\bm{\mathsf{K}}(t)} & \Re{\bm{\mathsf{K}}(t)}
\end{array}\right]$ 
\State Evolve $\ket{\varphi(0)}$ according to the following ODE: 
\begin{equation*}
    \sket{\ddot{\varphi}(t)} = -\bm{\mathcal{K}}(t)\sket{\varphi(t)}  \,,
\end{equation*}
from time $t=0$ to $t=T$
\State Measure and decode $\sket{\varphi(T)}$
\end{algorithmic}
\textbf{Result} $\sket{\varphi(T)}$
\vspace{3pt}
If necessary, obtain $\sket{\psi(T)}$ by \[\sket{\psi(T)} = \sket{\varphi(T)}_{0<j\leq 2^{n}} + i\sket{\varphi(T)}_{2^{n}<j\leq 2\cdot 2^{n}} \]
\end{algorithm}

\section{Properties of \texorpdfstring{$\bsf{K}(t)$}{bsfKt} and \texorpdfstring{$\bcal{K}(t)$}{bcalKt}}
\label{sec:properties}
Starting from the Schr\"{o}dinger equation we derive new matrices $\bsf{K}(t)$ and $\bcal{K}(t)$. This section presents various classes of properties associated with these new operators, and how they are related to the structure of the Hamiltonian $\bsf{H}(t)$. We summarize the relationship between the three operators in Table.~\ref{tab:K_vs_stiffness_simple}. 

\begin{table*}[htb!]
\caption{\label{tab:K_vs_stiffness_simple}%
A table illustrating the relationship between the Hamiltonian $\bsf{H}(t)$, $\bsf{K}(t)$ and $\bcal{K}(t)$.
}
  \centering
  \begin{ruledtabular}
  \begin{tabular}{@{}lll@{}}
    $\bsf{H}(t)$  & $\bsf{K}(t)$  &$\bcal{K}(t)$  \\
    Time-dependent   &  non-Hermitian   & Real  \\
    \midrule
    $\bsf{H}$  & $\bsf{K}$  &$\bcal{K}$ \\ 
    Time-independent \quad\quad\quad\quad\quad\quad\quad\quad\quad\quad\quad\quad\quad & \specialcelll{Hermitian\\Positive semidefinite}  & \specialcelll{Real symmetric\\Positive semidefinite}  \\ \cmidrule{2-3}
    \quad\quad\quad -- Real symmetric & Real symmetric & Block diagonal 
    \adjustbox{valign=c}{
    \begin{tikzpicture} [every left delimiter/.style={black,xshift=1ex},
   every right delimiter/.style={black, xshift=-1ex}]
      \matrix (m) at (0,0) [matrix of math nodes,
      left delimiter = {[},
      right delimiter = {]}, 
      nodes={white,fill=black!20,minimum size=3mm,text height=0.6ex, text width=0.7ex, text depth=0ex}]
      {
        \node {}; & \node[fill=blue!0]{};\\
        \node[fill=blue!0]{}; & \node{};\\
      };
    \end{tikzpicture}
    }
    \\ \cmidrule{2-3}
    \quad\quad\quad -- ZM-matrix & Z-matrix & Block diagonal Z-matrix 
        \adjustbox{valign=c}{
    \begin{tikzpicture}[every left delimiter/.style={black,xshift=1ex},
   every right delimiter/.style={black, xshift=-1ex}]
      \matrix (m1) at (0,0) [matrix of math nodes,
      left delimiter = {[},
      right delimiter = {]},
      nodes={white,fill=black!20,minimum size=1mm,text height=1.3ex, text depth=0}]
      {
        \node {\tiny Z}; & \node[black,fill=blue!0]{};\\
        \node[black,fill=blue!0]{}; & \node{\tiny Z};\\
      };
    \end{tikzpicture}
    }
    \\ 
  \end{tabular}
  \end{ruledtabular}
\end{table*}

\subsection{Properties of \texorpdfstring{$\bm{\mathsf{K}}(t)$}{Kt}}
\label{sec:properties_of_k}
We elaborate on the properties of $\bsf{K}(t)$, which also explain some of the motivations for the additional differentiation of the Schr\"odinger equation in Sec.~\ref{sec:mapping}. It is easy to deduce that $\dot{\bsf{H}}(t)$ and $\bsf{H}(t)^2$ are Hermitian due to the inherent Hermitian nature of $\bsf{H}(t)$. Therefore, $\bsf{K}(t) = i\dot{\bsf{H}}(t) + \bsf{H}(t)^2$ is generally non-Hermitian. Interesting properties of $\bsf{K}(t)$ happen when the operators are time-independent --- i.e., $\bsf{K} = \bsf{H}^2$ --- or when the Hamiltonian is slowly varying compared to its square norm:
$\vertiii{\dot{\bm{\mathsf{H}}}(t)} \ll \vertiii{\bm{\mathsf{H}}(t)^2}$.

\subsubsection{Hermitian \texorpdfstring{$\bsf{K}$}{K}}
\label{sec:hermitian_k}
At time $t$, the matrix $\bsf{K}(t) = i\dot{\bsf{H}}(t) + \bsf{H}(t)^2$ is Hermitian if and only if $\dot{\bsf{H}}(t) = 0$. Therefore, a sufficient condition for $\bsf{K}(t)$ to be Hermitian is that the Hamiltonian is time-independent; i.e., $\bsf{H}(t) = \bsf{H}$ for all $t$. 

\subsubsection{Real symmetric \texorpdfstring{$\bsf{K}$}{K}}
If $\bsf{H}$ is real symmetric, then $\bsf{K} = \bsf{H}^2$ is also real symmetric. This occurs in the case of time-reversal invariant Hamiltonian. There are additional scenarios where $\bsf{K}$ can be real symmetric. For example, when $\Re{\bsf{H}}\Im{\bsf{H}}$ is real symmetric. This condition is trivially met when $\bsf{H}$ is either itself real symmetric or purely imaginary. These possibilities are detailed in Appendix~\ref{app:real_symmetric_k}.

\subsubsection{The matrix \texorpdfstring{$\bsf{K}$}{K} is a \texorpdfstring{Z-matrix}{z-matrix}}
\label{sec:zm_k}
We introduce two matrix definitions. For more on the classification of matrix families, see for example Refs.~\texorpdfstring{\cite{azimzadeh2019fast,power_zmatrix,bermanbook}}{citeazimzadeh2019fast}.
\begin{definition} 
A Z-matrix is a real matrix whose off-diagonal entries are nonpositive.
\end{definition}
\begin{definition}
A ZM-matrix is a matrix whose every power is a Z-matrix.
\end{definition}
Hence, if $\bsf{H}$ is a ZM-matrix, then its square --- $\bsf{K}$ --- is a Z-matrix. A Z-matrix is also sometimes called a stoquastic operator~\cite{bravyi2008,marvian2019computational,klassen2019two, choi2021essentiality}.

\subsubsection{Positive semidefinite \texorpdfstring{$\bsf{K}$}{K}}
\label{sec:positivesemidefinite_K}
Since the eigenvalues of $\bsf{H}(t)$ are real, the eigenvalues of $\bsf{H}(t)^2$ are nonnegative. Therefore, for time-independent $\bsf{H}$, the Hermitian matrix $\bsf{K}$ is necessarily positive semidefinite. Positive semidefiniteness of $\bsf{K}$ also implies that it has nonnegative diagonal entries. Such matrices with nonpositive off-diagonal and nonnegative diagonal entries often arise in dynamical systems of many fields.

It is also well known that a scalar multiple $\alpha$ of an identity operator can be added to the Hamiltonian ($\bsf{H}(t) \rightarrow \bsf{H}(t) + \alpha\bsf{I}$) without changing the dynamics of the expectation values. Perturbing\footnote{This can be done without computing the eigenvalues of $\bsf{H}(t)$. Adding an arbitrary infinitesimal $\epsilon\bsf{I}$ to $\bsf{H}(t)$ perturbs the whole eigenspectrum by $+\epsilon$, effectively removing all zero eigenvalues. 
} $\bsf{H}$ in this manner can make its eigenvalues all nonzero. This in turn makes $\bsf{K}$ positive definite. A real symmetric positive definite Z-matrix is also called a Stieltjes matrix~\cite{bermanbook}. 

\subsection{Properties of \texorpdfstring{$\bcal{K}(t)$}{Kt}}
The real matrix $\bcal{K}(t)$ in the linear system of ODE possesses certain unique properties, which are dependent on the properties of $\bsf{K}(t)$. 
Each group of properties will generate its own unique set of dynamics. 

\subsubsection{Real symmetric \texorpdfstring{$\bcal{K}$}{K}}
\label{sec:real_symmetric_kcal}
If $\bsf{K}$ is Hermitian, then $\bcal{K}$ is real symmetric. A simple proof is given in Appendix~\ref{app:simpleproof_2}.

\subsubsection{Block diagonal \texorpdfstring{$\bcal{K}$}{K}}
\label{sec:block_diagonal_kcal}
If $\bsf{K}$ is real, then $\bcal{K}$ is block diagonal.

\subsubsection{The matrix \texorpdfstring{$\bcal{K}$}{K} is a block diagonal Z-matrix}
\label{sec:zm_kcal}
If $\bsf{K}$ is a Z-matrix, then $\bcal{K}$ is a block diagonal Z-matrix. 

\subsubsection{Positive semidefinite $\bcal{K}$}
\label{sec:positivesemidefinite_Kcal}
If $\bsf{K}$ is positive semidefinite, then $\bcal{K}$ is positive semidefinite. A simple proof is given in Appendix~\ref{app:simpleproof_3}.

\section{Real second-order linear dynamical system}
\label{sec:real_2nd_lds}
Recall that the second-order real-valued linear ODE of Eq.~\ref{eq:2ndorderODE_1} is
\begin{equation*}
    \bm{\mathsf{A}}(t)\ddot{\pmb{y}}(t) + \bm{\mathsf{B}}(t)\dot{\pmb{y}}(t) + \bm{\mathsf{C}}(t)\pmb{y}(t) = \pmb{d}(t) \,. \tag{\ref{eq:2ndorderODE_1}}
\end{equation*}
One can observe that the Schr\"odinger equation in real form:
\begin{equation*}
    \sket{\ddot{\varphi}(t)} = -\bcal{K}(t)\sket{\varphi(t)}\,, \tag{\ref{eq:realform}}
\end{equation*}
where $\bcal{K}(t) \in \mathbb R^{2N\times 2N}$, is a special case of 
Eq.~\ref{eq:2ndorderODE_1}.

We will further illustrate how, for a time-independent $\bsf{H}$, the Schr\"odinger equation in real form (Eq.~\ref{eq:realform}) can be analogously simulated by a second-order linear dynamical system, rather than a GPAC or more intricated analog computers. The dynamics of a second-order linear dynamical system is described by the following equation:
\begin{equation}
\label{eq:2ndorderODE_3}
    \bm{\mathsf{M}}\ddot{\pmb{x}}(t) + \bm{\mathsf{D}}\dot{\pmb{x}}(t) + \bm{\mathsf{S}}\pmb{x}(t) = \pmb{f}(t)\,.
\end{equation}
The matrices $\bsf{M}$, $\bsf{D}$, and $\bsf{S}$ are real symmetric and positive semidefinite~\cite{kawano2013decoupling,kawano2018decoupling}, representing the mass, damping, and stiffness, respectively.
The real vectors $\pmb{x}(t)$ and $\pmb{f}(t)$ specify the system's generalized coordinates and the external driving. 

\begin{theorem}
\label{thm:closedgeneral}
    A closed quantum system with a time-independent Hamiltonian $\bsf{H}$ can be analogously simulated with a second-order linear dynamical system. 
\end{theorem}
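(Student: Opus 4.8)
The plan is to prove the theorem constructively: exhibit an explicit choice of the matrices $\bsf{M}$, $\bsf{D}$, $\bsf{S}$ and the driving $\pmb{f}(t)$ in the second-order linear dynamical system Eq.~\ref{eq:2ndorderODE_3} whose trajectory $\pmb{x}(t)$ reproduces the solution $\sket{\varphi(t)}$ of the real-form Schr\"odinger equation Eq.~\ref{eq:realform}, which then decodes back to $\sket{\psi(t)}$. Since $\bsf{H}$ is time-independent, $\bsf{K} = \bsf{H}^2$, which is Hermitian (Sec.~\ref{sec:hermitian_k}) and positive semidefinite (Sec.~\ref{sec:positivesemidefinite_K}); hence, by the properties collected in Secs.~\ref{sec:real_symmetric_kcal} and~\ref{sec:positivesemidefinite_Kcal}, the block matrix $\bcal{K}\in\mathbb{R}^{2N\times 2N}$ is real symmetric and positive semidefinite. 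The candidate identification is then $\bsf{M}=\bsf{I}_{2N}$, $\bsf{D}=\bm{0}$, $\bsf{S}=\bcal{K}$, and $\pmb{f}(t)=\bm{0}$.

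First I would check admissibility: $\bsf{I}_{2N}$ is real symmetric positive definite (hence positive semidefinite), the zero matrix is vacuously real symmetric positive semidefinite, and $\bsf{S}=\bcal{K}$ is real symmetric positive semidefinite by the previous paragraph, so all three matrices meet the structural requirements placed on Eq.~\ref{eq:2ndorderODE_3}. Substituting this choice into Eq.~\ref{eq:2ndorderODE_3} gives $\ddot{\pmb{x}}(t) = -\bcal{K}\pmb{x}(t)$, which is literally Eq.~\ref{eq:realform} under the identification $\pmb{x}(t)\equiv\sket{\varphi(t)}$.

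Next I would handle the initial data. The initial configuration $\pmb{x}(0)$ and $\dot{\pmb{x}}(0)$ of the analog device are free inputs; setting $\pmb{x}(0)=\sket{\varphi(0)}$ and $\dot{\pmb{x}}(0)=\sket{\dot{\varphi}(0)}$ as prescribed by the decomplexified IVP Eq.~\ref{eq:ivp_real} (equivalently, Steps 1--3 of Alg.~\ref{alg:sim1}) makes the dynamical system and the IVP Eq.~\ref{eq:ivp_real} coincide. By the equivalence of Eq.~\ref{eq:ivp_real} with the Schr\"odinger equation established in Sec.~\ref{sec:eqn_equiv}, the trajectory satisfies $\pmb{x}(t)=\sket{\varphi(t)}$ for all $t\in[0,T]$, and the complex amplitudes are recovered by the decoding map $\sket{\psi(t)} = \pmb{x}(t)_{0<j\le 2^n} + i\,\pmb{x}(t)_{2^n<j\le 2\cdot 2^n}$, which completes the analog simulation.

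There is no genuine obstacle here; the step requiring the most care is simply verifying the symmetry/semidefiniteness of $\bcal{K}$, which has already been done in Sec.~\ref{sec:properties}. The only subtlety worth a remark is that Eq.~\ref{eq:2ndorderODE_3} asks only for positive semidefiniteness, so the identification goes through as stated; if one additionally wants $\bsf{S}$ to be positive definite (e.g., to match a more restrictive physical realization), one may first apply the harmless shift $\bsf{H}\to\bsf{H}+\alpha\bsf{I}$ of Sec.~\ref{sec:positivesemidefinite_K}, which leaves the dynamics of expectation values invariant while making $\bcal{K}=(\bsf{H}+\alpha\bsf{I})^2$ positive definite. No approximation enters any step, so the simulation is exact; the only computational cost is the one-time evaluation of $\bsf{K}=\bsf{H}^2$ and of $\ket{\dot{\psi}(0)}$ used to initialize the device.
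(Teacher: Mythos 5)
Your proposal is correct and follows essentially the same route as the paper's proof: time-independence of $\bsf{H}$ gives a Hermitian, positive semidefinite $\bsf{K}=\bsf{H}^2$, hence a real symmetric positive semidefinite $\bcal{K}$, which is then realized as the stiffness matrix $\bsf{S}$ in Eq.~\ref{eq:2ndorderODE_3} with $\bsf{M}=\bsf{I}$ and $\bsf{D}=\bm{0}$. The extra details you supply (matching the initial data via Eq.~\ref{eq:ivp_real} and the optional shift to make $\bsf{S}$ positive definite) are consistent with the surrounding text of the paper and do not change the argument.
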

\begin{proof}
\label{pf:closedgeneral}
Since $\bsf{H}$ is time-independent, it follows that $\bsf{K}$ is Hermitian (see Sec.~\ref{sec:hermitian_k}). Furthermore, the $2N \times 2N$ matrix $\bcal{K}$ is real symmetric (see Sec.~\ref{sec:real_symmetric_kcal} or Table.~\ref{tab:K_vs_stiffness_simple}). From Sec.~\ref{sec:positivesemidefinite_K}, we know that $\bsf{K}$ is positive semidefinite, whence positive semidefiniteness of $\bcal{K}$ follows (see Sec.~\ref{sec:positivesemidefinite_Kcal}). Since $\bcal{K}$ is real symmetric and positive semidefinite, it can be realized through the stiffness matrix $\bsf{S}$, in a $2N \times 2N$ second-order linear dynamical system (Eq.~\ref{eq:2ndorderODE_3}), with $\bsf{M} = \bsf{I}$ and $\bsf{D} = \bsf{0}$.
\end{proof}

The procedure of the analog classical simulation through a second-order linear dynamical system is similiar to Alg.~\ref{alg:sim1}. In essence, the Schr\"odinger equation in real form:
\begin{equation*}
    \sket{\ddot{\varphi}(t)} + \bcal{K}\sket{\varphi(t)} = 0\,, \tag{\ref{eq:realform}}
\end{equation*}
is replaced with 
\begin{equation}
\label{eq:2ndorderODE_spring_new}
    \ddot{\pmb{x}}(t) + \bsf{S}\pmb{x}(t) = 0\,,
\end{equation}
for example, by letting $\pmb{x}(t) = d\ket{\varphi(t)}$ for some real constant such that $\max||\pmb{x}(t)|| = d$. Depending on the physical constraint of the linear dynamical system, one can also multiply the stiffness matrix $\bsf{S}$ with another positive constant to achieve time contraction or dilation of the quantum simulation.

\subsection{Decoupled second-order linear dynamical system}
\label{sec:decoupled2nd}
Furthermore, if $\bsf{H}$ is real symmetric, one can have the following observation:
\begin{corollary}
\label{thm:closedgeneral_2}
    If $\bsf{H}$ is real symmetric, the quantum system can be analogously simulated with a decoupled second-order linear dynamical system. 
\end{corollary}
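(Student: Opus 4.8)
The plan is to chain together the structural facts about $\bcal{K}$ already established in Sec.~\ref{sec:properties} for a real symmetric time-independent $\bsf{H}$, and then simply read off the decoupling from the resulting block structure of the real-form ODE, checking along the way that the initial data of the IVP respects the same splitting.

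First I would note that real symmetry of the time-independent $\bsf{H}$ forces $\bsf{K} = \bsf{H}^2$ to be real (in fact real symmetric), so $\Im{\bsf{K}} = 0$. Invoking Sec.~\ref{sec:block_diagonal_kcal}, the real representation is then block diagonal, and more precisely $\bcal{K} = \bsf{H}^2 \oplus \bsf{H}^2$ — two identical $N \times N$ blocks with no coupling. Consequently the real-form Schr\"odinger equation $\sket{\ddot{\varphi}(t)} = -\bcal{K}\sket{\varphi(t)}$ splits into two uncoupled $N$-dimensional second-order systems, one carrying $\Re{\sket{\psi(t)}}$ and the other $\Im{\sket{\psi(t)}}$, each of the form $\ddot{\pmb{u}}(t) + \bsf{H}^2\pmb{u}(t) = 0$. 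Each of these is precisely a second-order linear dynamical system of the type Eq.~\ref{eq:2ndorderODE_3} with $\bsf{M} = \bsf{I}$, $\bsf{D} = \bsf{0}$, and stiffness $\bsf{S} = \bsf{H}^2$, which is real symmetric and positive semidefinite, hence admissible — this much is already contained in Theorem~\ref{thm:closedgeneral}; the added content is the block-diagonality.

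Next I would verify that the initial data of the IVP (Eq.~\ref{eq:ivp_real}) also respects the splitting. Substituting $\Im{\bsf{H}(0)} = 0$ and $\Re{\bsf{H}(0)} = \bsf{H}$ into the initial-velocity formula collapses the off-diagonal coupling blocks to $\pm\bsf{H}$ acting separately on $\Im{\sket{\psi(0)}}$ and $\Re{\sket{\psi(0)}}$, so each of the two $N$-dimensional subsystems inherits its own complete, independent set of initial conditions. Therefore the $2N$-dimensional simulator of Theorem~\ref{thm:closedgeneral} can be replaced by two fully independent $N$-dimensional second-order linear dynamical systems (equivalently, by a single $N$-dimensional one for the $\Re$ block together with the appropriate initial velocity, from which $\sket{\psi(t)}$ is recovered by the usual decoding), which is the claim. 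The decoding step is unaffected since it is block-wise by the definition of $\sket{\varphi}$.

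I do not expect a serious obstacle; the only point that needs care is pinning down the intended meaning of ``decoupled.'' One could push further and orthogonally diagonalize $\bsf{H}^2$ to obtain $2N$ scalar harmonic-oscillator (normal-mode) equations, but that requires the spectral data of $\bsf{H}$ and so is not obtained for free; the statement to prove is the cheaper, coordinate-free decoupling into two $N$-dimensional copies, and I would say so explicitly, mentioning the further modal decomposition only as a remark.
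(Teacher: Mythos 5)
Your proposal is correct and follows essentially the same route as the paper: real symmetry gives $\bsf{K}=\bsf{H}^2$ real, hence $\bcal{K}=\bsf{H}^2\oplus\bsf{H}^2$ block diagonal, and one takes $\bsf{S}_1=\bsf{S}_2=\bsf{H}^2$ as the two decoupled stiffness blocks. Your extra check on the initial data matches the paper's remark that the ``kickoff'' velocities of each subsystem are set by $\bsf{H}$ acting on the other block's initial displacement, which is initialization data rather than dynamical coupling, so nothing is missing.
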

\begin{proof}
A decoupled system can be separated into two subsystems such that there is no stiffness connection between the two.
It suffices to show that if $\bsf{H}$ is real symmetric, the corresponding stiffness matrix $\bsf{S}$ in Eq.~\ref{eq:2ndorderODE_spring_new} can be written as $\bsf{S}=\bsf{S}_1 \oplus \bsf{S}_2$. 

If $\bsf{H}$ is real symmetric, $\bcal{K}$ is block diagonal (see Sec.~\ref{sec:block_diagonal_kcal} or Table.~\ref{tab:K_vs_stiffness_simple}):
\begin{equation*}
    \bcal{K} =  \Re{\bsf{K}} \oplus \Re{\bsf{K}}  = \bsf{H}^2 \oplus \bsf{H}^2 \,.
\end{equation*}
Therefore, one can set $\bsf{S}_1, \bsf{S}_2 = \bsf{H}^2$, or scale them both by a constant if the simulation timescale need not exactly match that of the quantum system's evolution. The dynamics of a quantum system with a real symmetric $\bsf{H}$ is thus fully captured by Eq.~\ref{eq:2ndorderODE_spring_new} with $\bsf{S}=\bsf{S}_1 \oplus \bsf{S}_2$.
\end{proof}
Here are some observations for a real symmetric $\bsf{H}$. The Schr\"odinger equation in real form (Eq.~\ref{eq:realform}) describes the intertwining between the real and imaginary part of the wavefunction $\ket{\psi(t)}$ of the Schr\"odinger equation. For real symmetric $\bsf{H}$, the acceleration of the real, $\Re{\ket{\psi(t)}}$, and imaginary, $\Im{\ket{\psi(t)}}$, parts of probability amplitudes at time $t$ depends only on their respective values at the same time. 
Specifically, 
\begin{align*}
    \Re\{\ket{\ddot{\psi}(t)}\} = -\bsf{H}^2\Re{\ket{\psi(t)}}\,, \nonumber\\
    \Im\{\ket{\ddot{\psi}(t)}\} = -\bsf{H}^2\Im{\ket{\psi(t)}}\,.
\end{align*}

On the other hand, from Eq.~\ref{Eq:Schr_real1st_initial} (See also Eq.~\ref{Eq:Schr_real1st_2} in the Appendix), we have
\begin{equation}
\ket{\dot{\varphi}(t)} = -\begin{bmatrix}
\bsf{0} & -\bsf{H}\\
\bsf{H} & \bsf{0}
\end{bmatrix}\sket{\varphi(t)} \,.
\end{equation}
Therefore, the velocities of the real and imaginary parts of $\ket{\psi(t)}$ depend on the imaginary and real parts of $\ket{\psi(t)}$, respectively. Specifically, 
\begin{align*}
\Re\{\ket{\dot{\psi}(t)}\}  &= \bsf{H}\Im{\ket{\psi(t)}}\,, \nonumber\\
\Im\{\ket{\dot{\psi}(t)}\}  &= -\bsf{H}\Re{\ket{\psi(t)}}\,.
\end{align*}
This determines the initial, or ``kickoff", velocities of the two decoupled subsystems of the linear dynamical system, which are related to the initial displacements of the counterparts.

\subsection{Spring-mass system}
\label{sec:springmass}
As an example, we show how quantum dynamics can be simulated via a one-dimensional spring-mass system. This classical system is composed of $2N$ masses. The spring-mass connection itself can be represented as a simple graph, i.e., without multiple springs between masses or loops. Therefore, counting the wall, there are at most $N(2N+1)$ springs. Fig.~\ref{fig:massspring_12_a} is an illustration of the spring-mass setup, where the spring constants and masses are tunable parameters. In many cases of analog classical simulation, the spring-mass system is decoupled. A decoupled spring-mass system is also composed of $2N$ masses --- but at most $N(N+1)$ springs.

An unforced and undamped spring-mass system is a special case of the second-order linear dynamical system, where Eq.~\ref{eq:2ndorderODE_3} reduces to
\begin{equation}
\label{eq:2ndorderODE_spring}
    \bsf{M}\ddot{\pmb{x}}(t) + \bsf{S}\pmb{x}(t) = 0\,.
\end{equation}
In this case, the mass matrix $\bsf{M}$ is a diagonal matrix consisting of individual masses and is positive definite. Hence, it is easily invertible. The stiffness matrix $\bsf{S}$ is real symmetric and can be constructed with nonnegative\footnote{The spring constant of a nonexistent spring is zero.} spring constants $\kappa_{ij}$~\cite{RN121} .
Particularly, one can let the spring constants $\kappa_{ij}$ ($i\neq j$) be the ones connecting mass $i$ and mass $j$; while the spring constants $\kappa_{ii}$ are the ones connecting mass $i$ to a fixed end like a wall~\cite{gladwell_book,babbush2023exponential}. This is also known as a fixed-fixed system~\cite[Chap.~4]{gladwell_book} and the stiffness matrix $\bsf{S}$ is also positive definite\footnote{If there is no wall, $\bsf{S}$ can be positive semidefinite. Rigid body mode exists due to the system's freedom to move as a whole.}. 
The vector $\pmb{x}(t)$ specifies the displacement of each mass from its equilibrium position. Eq.~\ref{eq:2ndorderODE_spring} is the result of Newton's second law of motions for $2N$ masses~\cite{babbush2023exponential}. It can also be a broader result of Hamilton's principle (see Appendix~\ref{app:spring_mass_equation_derivation} for derivation), which can be generalized to other physical phenomena such as neural network~\cite{tsang1991dynamics,stiefel2016neurons, gerstner2014neuronal} and superconducting arrays~\cite{brown2003globally}.

Looking at Eq.~\ref{eq:2ndorderODE_spring}, there are some degrees of freedom in how $\bsf{S}$ can be expressed without changing the dynamics of the spring-mass system. While $\pmb{x}(t)$ specifies the displacement of each mass from its equilibrium position, the positive axis direction of individual mass can be assigned differently. Below we explain two encoding schemes: convention encoding, where every mass shares the same positive axis direction; and unconventional encoding, where some of the masses have their positive displacements defined opposite to others. At last, we introduce a special trick to embedd a Hamiltonian into the spring-mass system: translation and truncation.

\subsubsection{Conventional encoding: every mass shares the same positive axis direction}
\label{sec:conventional}
In the conventional encoding scheme, every mass shares the same positive axis direction. For example, for a 1-D system, the positives are either all to the right or all to the left. In this way, the off-diagonal element of $\bsf{S}$ is the negative of spring constant, and the diagonal elements are the sum of the spring constants connecting to the respective mass: $\bsf{S}_{ij} = -\kappa_{ij}$ for $i \neq j$
and
$\bsf{S}_{ii} = \sum_{j}\kappa_{ij}$. 
If $\bsf{S}$ is constructed in this way, it is a Z-matrix with nonnegative diagonals. We here introduce the notion of a L$_{0}$-matrix~\cite{azimzadeh2019fast}.

\begin{definition}
A L$_{0}$-matrix is a Z-matrix with nonnegative diagonal entries.
\end{definition}

In this conventional encoding scheme, $\bsf{S}$ is a L$_{0}$-matrix. Since for every row $i$, $\bsf{S}_{ii} \geq \sum_{j\neq i}|\bsf{S}_{ij}|$, it is also weakly diagonally dominant~\cite{zhao2018new,azimzadeh2019fast}.
\begin{figure*}[htb!]  
  \subfigure[]{
  \centering
\includegraphics[width=0.99\columnwidth]{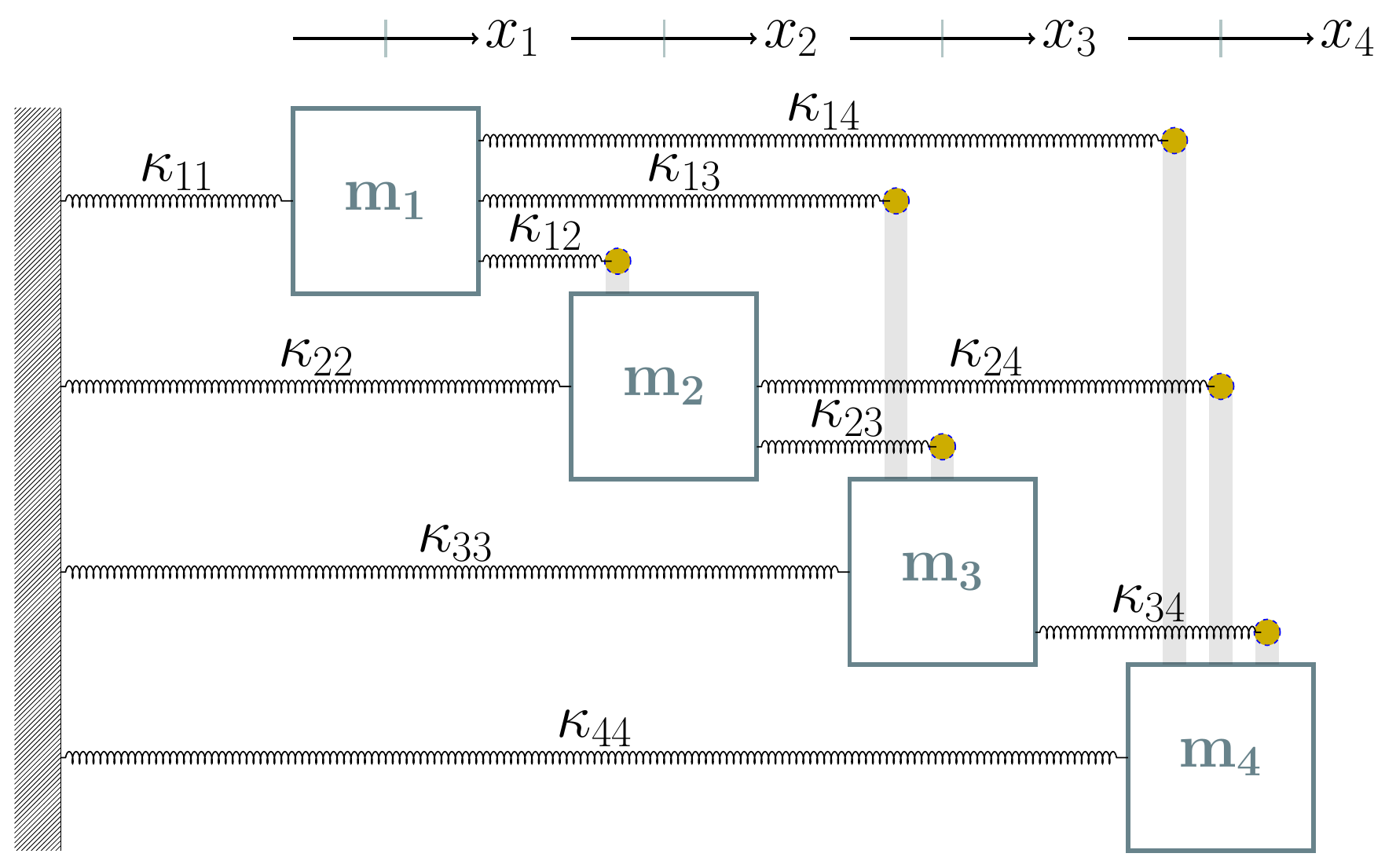}  
    \label{fig:massspring_12_a}
    }
    \hspace{0.1mm}
  \subfigure[]{
    \centering
\includegraphics[width=0.99\columnwidth]{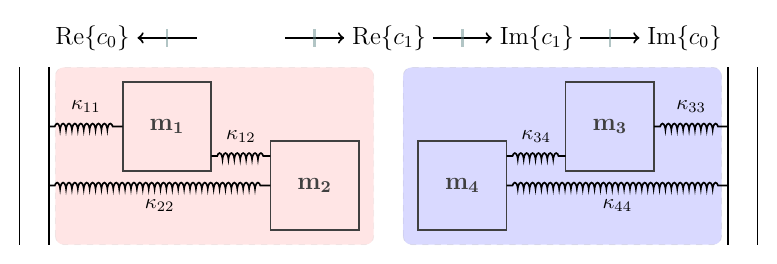} 
    \label{fig:massspring_12_b}
    }
  \caption{(a) An illustration of a fully-connected one-dimensional spring-mass system with $4$ masses and $10$ springs. The spring constants are labeled explicitly. 
    In this crowded scenario, the spring connections are drawn such that they are to the rods above the masses. Only lateral movements are allowed. 
    On top we label the axis of each mass, where every mass has the same positive direction. (b) An equivalent but decoupled system, and some springs to attached to another wall. Physically, it does not matter which wall the springs are attached to; they are drawn this way only for clarity. This system can simulate the dynamics of any time-independent real symmetric one-qubit Hamiltonian, with $\ket{\psi(t)} = c_{0}(t)\ket{0}+c_{1}(t)\ket{1}$.  On top we label the axis of each mass, where mass $1$ has an opposite positive compared to others.} 
    \label{fig:massspring_12}
\end{figure*}

One can invert the positive definite matrix $\bsf{M}$ and redefine $\bsf{S}' \equiv \bsf{M}^{-1}\bsf{S}$ to form a new equation:
\begin{equation}
\label{eq:2ndorderODE_spring_new2}
    \ddot{\pmb{x}}(t) + \bsf{S}'\pmb{x}(t) = 0\,,
\end{equation}
which coincidentally equals Eq.~\ref{eq:2ndorderODE_spring_new} of the second-order linear dynamical system. 
However, note that such a spring-mass system has a positive and diagonal mass matrix $\bsf{M}$, making the redefinition of $\bsf{S}$ straightforward. 

The quantum simulation using the spring-mass system is then to replace the Schr\"odinger equation in real form (Eq.~\ref{eq:realform}) with Eq.~\ref{eq:2ndorderODE_spring_new2}. The conventional encoding of $\bcal{K}$ to $\bsf{S}$ is often difficult. However, it is feasible in some cases, such as when $\bsf{H}$ is a ZM matrix (which implies that $\bcal{K}$ is a L$_{0}$-matrix). Other examples include a Hamiltonian that composed only of $\sigma_i^{z}$, such as the Ising model, where the $\bcal{K}$ is strictly diagonal. Again, one need not worry about if there is any negative entry on the diagonals --- because, as shown in Sec.~\ref{sec:positivesemidefinite_Kcal} or Table.~\ref{tab:K_vs_stiffness_simple}, $\bcal{K}$ is always positive semidefinite and real symmetric --- which automatically implies nonnegative diagonal entries.

For general Hamiltonians, the signs of off-diagonals can be a problem\footnote{There is one way to solve this sign problem: by engineering negative spring constants. By using clever tricks one can build negative equivalent spring constants, for example, buckling~\cite{churchill2016dynamically}, via two extra springs and a hinge~\cite{zhao2016negative, wang2004extreme}, and magnetic springs~\cite{tu2020novel,oyelade2017dynamics}. Although these engineering tricks induce local instabilities within the system, the stiffness matrix of the system remains globally stable~\cite{wang2004extreme}. Since $\bcal{K}$ is positive semidefinite, the global stablilty is guaranteed. Nevertheless, we do not look into such engineering tactics; instead we let every spring constant be strictly positive.}. Therefore, we introduce an unconventional encoding scheme.

\subsubsection{Unconventional encoding and the change of signs}
\label{sec:unconventional}

The general observation is, if the default positive axis of mass $i$ is ``flipped", the signs of these particular off-diagonal matrix elements $\bsf{S}_{ij}$ have to switch as follows, to preserve the dynamics described in Eq.~\ref{eq:2ndorderODE_spring}.
\begin{align}
    \bsf{S}_{ij} \rightarrow (-1)\times \bsf{S}_{ij} \,\,\,\,\, \forall j \neq i \label{eq:flip1}\\
    \bsf{S}_{ji} \rightarrow (-1)\times \bsf{S}_{ji}
    \,\,\,\,\, \forall j \neq i \label{eq:flip2}
\end{align}

Note also that Eq.~\ref{eq:flip2} is simply a consequence of Eq.~\ref{eq:flip1} for a real symmetric $\bsf{S}$. Fig.~\ref{fig:signswitching} is a pictorial way to visualize such sign switching.

\begin{figure}[htb!]
    \centering
\includegraphics[width=0.25\textwidth]{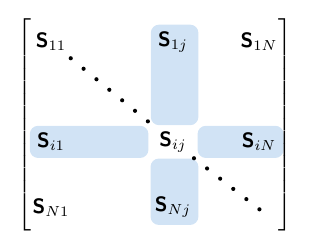}
    \caption{The highlighted matrix elements switch signs if the positive axis of mass $i$ flips. Other matrix elements of $\bsf{S}$ is unaffected.}
    \label{fig:signswitching}
\end{figure}

To provide further context, we present some examples of this unconventional encoding scheme:
\begin{enumerate}
    \item[Ex. 1:] Consider a sum of two transverse fields: $\bsf{H} = \sum_{i=1}^{2} \sigma_i^{x}$. Its dynamics can be simulated by a spring-mass system with $8$ masses and $4$ springs; with masses $3$, $4$, $7$, $8$ having an opposite positive-axis directions compared to the rest.
    \item[Ex. 2:] Consider a $3$-qubit Heisenberg XY chain~\cite{farreras2024simulation}:
$\bsf{H} = \sum_{i=1}^{2} \sigma_i^{x}\sigma_{i+1}^{x} + \sigma_i^{y}\sigma_{i+1}^{y}$. 
Its dynamics can be simulated by a spring-mass system with $16$ masses and $8$ springs; with masses $5$, $7$, $13$, $15$ having an opposite positive-axis directions compared to the rest.
\end{enumerate}

\subsubsection{Translation and truncation}
\label{sec:translate_and_truncate}
Consider again the conventional encoding scheme in Sec.~\ref{sec:conventional}, where every mass shares the same positive direction. Given a Hamiltonian $\bsf{H}$ of which the $\bcal{K}$ has many positive off-diagonals, 
one can define a translated Hamiltonian $\bsf{H}' = \bsf{H} - \alpha I$, from which a new $\bcal{K}'$ can be derived. Note that $\bsf{H}$ and $\bsf{H}'$ share the same expectation value dynamics. As $\alpha$ increases, the relative importance of the positive off-diagonals in $\bcal{K}'$ may decrease\footnote{For one thing, $\bcal{K}'$ becomes more diagonally dominant as $\alpha$ increase; for another, some positive off-diagonals of $\bcal{K}'$ might become negative.} and such entries may be truncated without affecting the dynamics. One can thus define:
\begin{align}
\label{eq:truncation}
    \bcal{K}'' &= \textsf{Trunc}(\bcal{K}')  \nonumber\\
    &= \textsf{Trunc}\left(
    \begin{bmatrix}[cc]
      \Re{(\bsf{H}')^2} & -\Im{(\bsf{H}')^2} \\
      \Im{(\bsf{H}')^2} & \Re{(\bsf{H}')^2}
    \end{bmatrix}
    \right) 
\end{align}
where the $\textsf{Trunc}(\cdot)$ operation is to remove the positive off-diagonals, i.e., set them to zero. One can then easily embed the $\bcal{K}''$ into a spring-mass setup\footnote{The assumption is that the positive semidefiniteness of $\bcal{K}''$ is preserved after the $\textsf{Trunc}(\cdot)$ operation.}. 

We illustrate such technique with an example of a sum of four transverse fields: 
\begin{equation}
\label{eq:sum4tf}
    \bsf{H} = \sum_{i=1}^{4}\sigma_i^{x} \,.
\end{equation}
With the initial state of $\ket{\psi(0)}=\ket{0000}$, we want to calculate the dynamics of $|c_{0}(t)|^2 = |\langle 0000|\psi(t)\rangle|^2$ from $t=0$ to $t=1$. The quantum evolution by Schr\"odinger equation (Eq.~\ref{eq:Schr}) is shown by the solid line in Fig.~\ref{fig:translatetruncation}.
\begin{figure}[htb!]
    \centering
\includegraphics[width=0.475\textwidth]{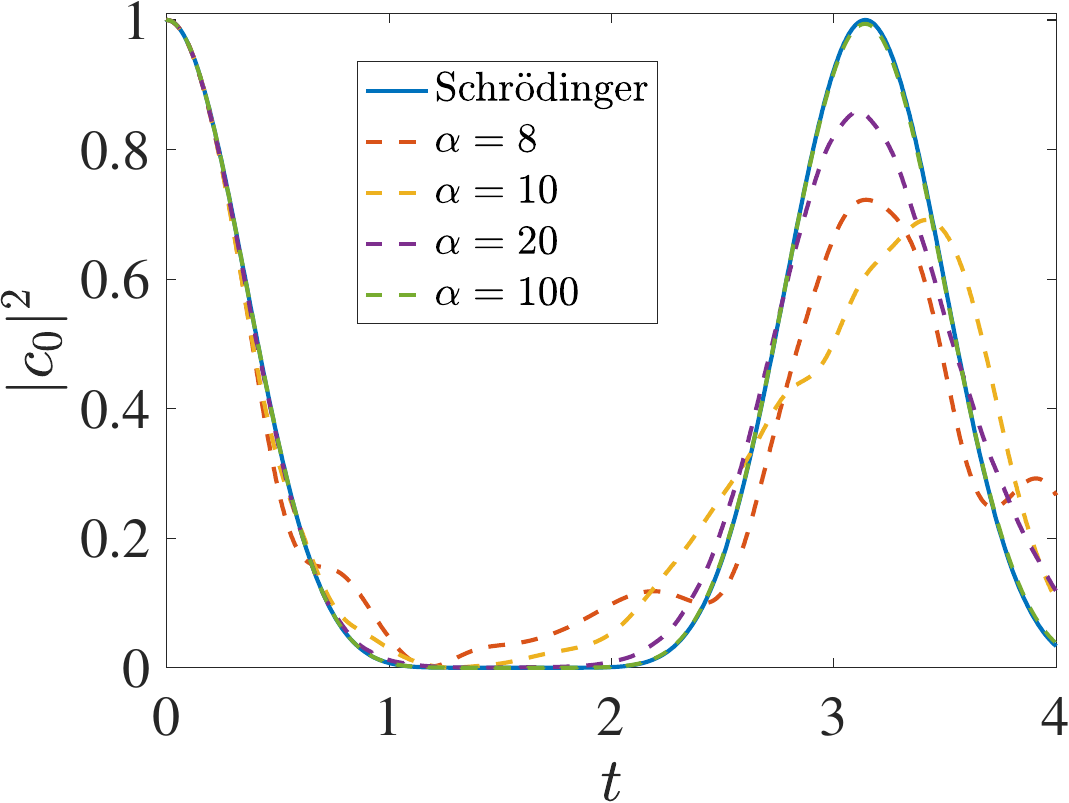}
    \caption{Solid line represents the $|c_0(t)|^2$ of $\ket{\psi(t)}$ under the Hamiltonian evolution of Eq.~\ref{eq:sum4tf}. Dotted line represents the $|c_0(t)|^2$ calculated from the displacements of masses during the spring-mass system evolution. Spring constants are derived from the technique of ``translation and truncation" with $4$ different values of $\alpha$. The approximation is almost perfect with $\alpha = 100$.}
    \label{fig:translatetruncation}
\end{figure}

The conventional encoding scheme fails because of the positive off-diagonals in $\bsf{H}^2$. For the simulation of spring-mass system, we evaluate different translated Hamiltonian $\bsf{H}' = \bsf{H} - \alpha I$, corresponding to different values of $\alpha \in \{8,10,20,100\}$. For each $\alpha$, we calculate $\bcal{K}''$ using the truncation operation in Eq.~\ref{eq:truncation}, then embed $\bcal{K}''$ into a spring-mass stiffness matrix directly as $\bsf{S} = \bcal{K}''$. For all $\alpha$, the initial displacements $\pmb{x}(0)$ of all masses are zero except mass $1$, where $\pmb{x}_{1}(0) = 1$. The initial velocities are calculated by Eq.~\ref{eq:ivp_real}, i.e., $-\begin{bmatrix}
\bsf{0} & \hspace{-0.8ex}-\Re{\bsf{H}'} \\
\Re{\bsf{H}'} & \hspace{-0.8ex} \bsf{0}
\end{bmatrix}\pmb{x}(0)$. The spring-mass system is evolved under the dynamics equation Eq.~\ref{eq:2ndorderODE_spring_new}. For all $\alpha$, we obtain $|c_0(t)|^2 = \pmb{x}_{1}(t)^2+\pmb{x}_{17}(t)^2$ and plot them as dotted lines in Fig.~\ref{fig:translatetruncation}. With a larger $\alpha$, one can approximate the quantum evolution by the spring-mass evolution more accurately. The numerical recipe for the classical dynamics simulation is outlined in Appendix.~\ref{app:numericalrecipe}.

\section{Analog classical speedup with spring-mass system}
\label{sec:speedup}
In this section, we study a spring-mass system with scaled parameters, aiming to achieve a speedup relative to quantum evolution. For clarity, all quantities are presented with their respective physical units.
We consider the following $4$-qubit Hamiltonian: 
\begin{equation}
\label{eq:4-qubit-speedup}
    \bsf{H} = g\sigma_1^{x} + J\sum_{i=1}^{3} \sigma_i^{z}\sigma_{i+1}^{z}  \,,
\end{equation}
where $g=0.5\,$GHz and $J=2\,$GHz, which are of the similar order of magnitude as the transverse field and coupling strengths in modern quantum electronics. The initial state is an equal superposition state  $\ket{\psi(0)}= \ket{+}^{\otimes 4}$, where $\ket{+} = \frac{\ket{0}+\ket{1}}{\sqrt{2}}$. 

To simulate the quantum dynamics of this Hamiltonian, one can build a device with $32$ masses and $40$ springs, where $8$ springs connect the masses to another, and the remaining $32$ springs attach the masses to a fixed wall. We use the unconventional encoding scheme in Sec.~\ref{sec:unconventional} with masses $1$, $16$, $17$, $32$ having an opposite positive-axis direction compared to the other masses. One can measure the observables in SI unit:  displacement of the masses in meter (m), mass of the masses in kilogram (kg), and the spring constant in Newton per meter (N/m). In our example, each mass has a mass of $1$kg. In general, the smaller the mass, the faster the dynamics  (See, e.g., Eq.~\ref{eq:2ndorderODE_spring_new2}). 

To emulate the uniform superposition initial state, the initial displacements $\pmb{x}(0)$ of the masses from their equilibrium points are:
\begin{align}
\text{For all $i\in \{1, \dots,16\}$:      } \pmb{x}_{i}(0)  &= +1.25\text{\,m}  \,; \nonumber\\
\text{For all $i\in \{17, \dots,32\}$:      } \pmb{x}_{i}(0)  &= 0\,,
\label{eq:initialdisplacements}
\end{align}
Note that the size of the spring-mass system can be arbitrarily contracted or enlarged\footnote{The size depends on the constant $d$ elaborated under Eq.~\ref{eq:2ndorderODE_spring_new}, which can be as small or as large as desired, provided the system does not approach the quantum or relativistic limits.} without affecting the dynamics timescale. Therefore, the exact order of magnitude in Eq.~\ref{eq:initialdisplacements} is not of importance.

\begin{figure*}[htb!]  
\label{fig:classicalspeedup}
  \subfigure[]{
\includegraphics[width=0.9\columnwidth]{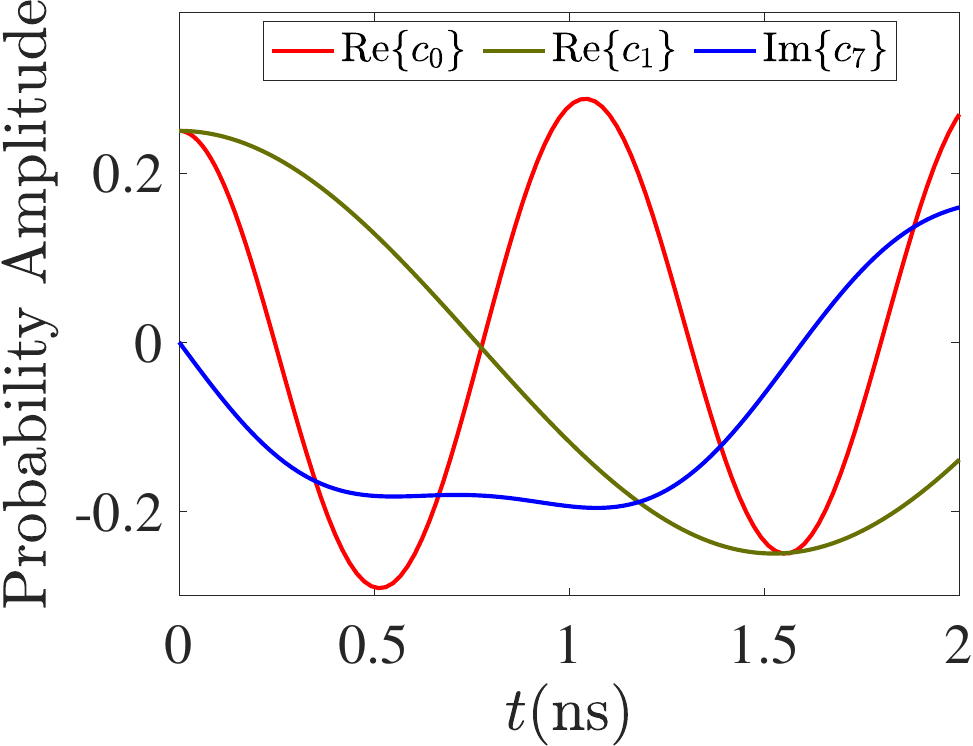}  
\label{fig:classicalspeedup_1}
    }
    \hspace{10mm}
  \subfigure[]{
\includegraphics[width=0.9\columnwidth]{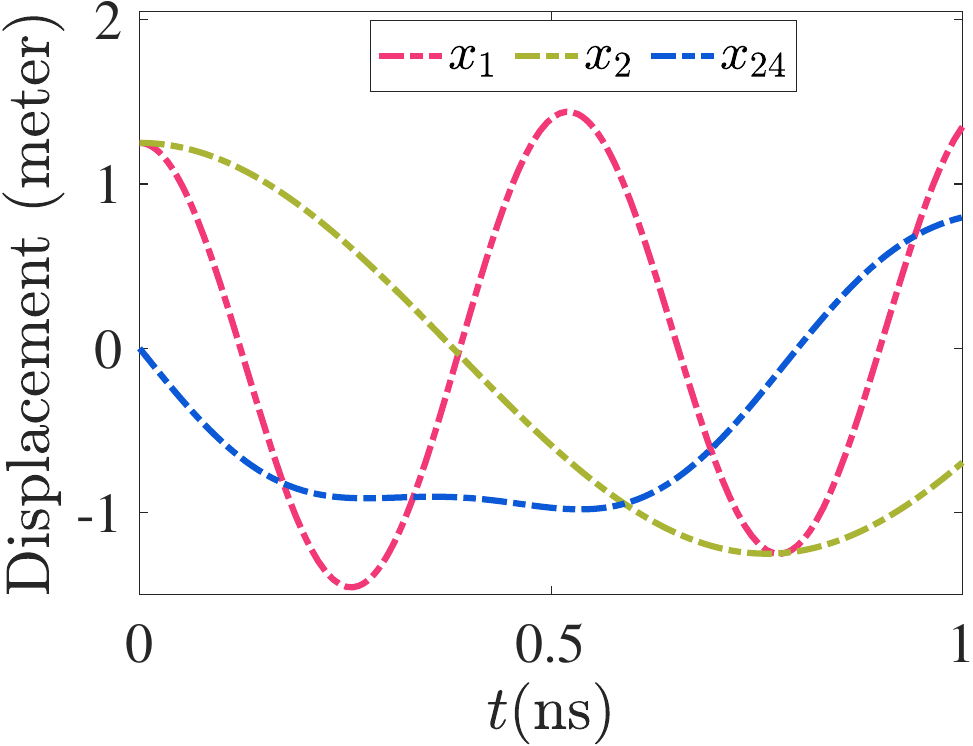} 
    \label{fig:classicalspeedup_2}
    }
  \caption{(a) Evolution of $\Re{c_{0}(t)}$, $\Re{c_{1}(t)}$ and $\Im{c_{7}(t)}$ under the drive of the Hamiltonian Eq.~\ref{eq:4-qubit-speedup} for a duration of $2\,$ns. (b) Displacement of the masses $1$, $2$ and $24$ relative to their equilibriums, under the drive of spring constants Eq.~\ref{eq:4-qubit-speedup_springs} for a duration of $1\,$ns. The two graphs are identical, differing only in the physical units on the y-axis and the timescale on the x-axis.} 
\end{figure*}

Given the initial displacements, it is straightforward to deduce from Eq.~\ref{eq:ivp_real} that an $p$-time linear speedup can be achieved by scaling the spring constants by $p^2$ and the initial velocities by $p$. In this example, we aim to achieve a $2$-time speedup compared to the quantum evolution timescale, so we set $p=2$. Specifically, the spring constants are as follows:
\begin{align}
\label{eq:4-qubit-speedup_springs}
\kappa_{1,9} = \kappa_{3,11} = \kappa_{6,14} = \kappa_{8,16} &= 16\text{\,GN/m}\,, \nonumber\\
\kappa_{17,25} = \kappa_{19,27} = \kappa_{22,30} = \kappa_{24,32} &= 16\text{\,GN/m}\,, \nonumber\\
\kappa_{2,2} = \kappa_{4,4} = \kappa_{5,5} = \kappa_{7,7} &= 17\text{\,GN/m}\,,\nonumber\\
\kappa_{10,10} =  \kappa_{12,12} = \kappa_{13,13} = \kappa_{15,15} &= 17\text{\,GN/m}\,,\nonumber\\
\kappa_{18,18} = \kappa_{20,20} = \kappa_{21,21} = \kappa_{23,23} &= 17\text{\,GN/m}\,,\nonumber\\
\kappa_{26,26} =  \kappa_{28,28} = \kappa_{29,29} = \kappa_{31,31} &= 17\text{\,GN/m}\,,\nonumber\\ 
\kappa_{1,1} =  \kappa_{6,6} = \kappa_{11,11} = \kappa_{16,16} &= 129\text{\,GN/m}\,,\nonumber\\ 
\kappa_{17,17} =  \kappa_{22,22} = \kappa_{27,27} = \kappa_{32,32} &= 129\text{\,GN/m}\,,\nonumber\\ 
\kappa_{3,3} =  \kappa_{8,8} = \kappa_{9,9} = \kappa_{14,14} &= 1\text{\,GN/m}\,,\nonumber\\ 
\kappa_{19,19} =  \kappa_{24,24} = \kappa_{25,25} = \kappa_{30,30} &= 1\text{\,GN/m}\,.
\end{align}
Note that the spring constant depends on specific spring material and size, and theoretically can be made arbitrarily large by connecting many springs in parallel\footnote{Recall the equivalent spring constant formula for parallel springs.}. For additional information, the initial velocities $\pmb{v}(0) = \dot{\pmb{x}}(0)$ of the masses are: 
\begin{align}
\text{For all $i\in \{1, \dots,16\}$:      } \pmb{v}_{i}(0)  &= 0  \,, \nonumber\\
\pmb{v}_{17}(0) = \pmb{v}_{32}(0)  &= -3.25\text{\,m\,s}^{-1}\,, \nonumber\\
\pmb{v}_{18}(0) = \pmb{v}_{20}(0)  = \pmb{v}_{24}(0) &= -1.25\text{\,m\,s}^{-1}\,, \nonumber \\
\pmb{v}_{25}(0) = \pmb{v}_{29}(0) = \pmb{v}_{31}(0) &= -1.25\text{\,m\,s}^{-1}\,, \nonumber \\
\pmb{v}_{19}(0) =  \pmb{v}_{21}(0) = \pmb{v}_{23}(0) &= +0.75\text{\,m\,s}^{-1}\,, \nonumber \\
\pmb{v}_{26}(0) = \pmb{v}_{28}(0) = \pmb{v}_{30}(0) &= +0.75\text{\,m\,s}^{-1}\,.
\end{align}

We are interested in the dynamics of probability amplitudes: $\Re{c_{0}(t)}$, $\Re{c_{1}(t)}$ and $\Im{c_{7}(t)}$, where in big-endian ordering, $c_{0}(t) = \langle 0000|\psi(t)\rangle$, $c_{1}(t) = \langle 0001|\psi(t)\rangle$ and $c_{7}(t) = \langle 0111|\psi(t)\rangle$.  Fig.~\ref{fig:classicalspeedup_1} shows the time evolution of  $\Re{c_{0}(t)}$, $\Re{c_{1}(t)}$ and $\Im{c_{7}(t)}$ in a closed quantum system. The values are obtained by solving the Schr\"odinger equation with the given values of transverse field and coupling strengths. Fig.~\ref{fig:classicalspeedup_2} shows the displacements $\pmb{x}_1(t)$, $\pmb{x}_2(t)$ and $\pmb{x}_{24}(t)$ of the masses with spring constants tuned to the values of Eq.~\ref{eq:4-qubit-speedup_springs}. The displacement data are obtained by integrating the spring-mass dynamics equation Eq.~\ref{eq:2ndorderODE_spring}. The results suggest that the spring-mass system evolves twice as fast as the quantum system in real-world clock time. 

\section{Analog classical simulation of an Eight-qubit QAOA (Max-cut)}
\label{sec:qaoa}
In this section, we explore the possibility of using an analog computer to execute a quantum algorithm known as the Quantum Approximate Optimization Algorithm (QAOA)~\cite{FarhiQAOA}. While the analog simulation is also possible via a spring-mass system with some supplementary encoding techniques, we move back to the General-Purpose Analog Computer (GPAC) and use the notations from Algorithm~\ref{alg:sim1} for a more general illustration.  

Consider a Max-Cut problem for a cycle graph of eight vertices and eight edges connecting each pair of neighboring vertices. Let $\bsf{H}_C$ represent the cost Hamiltonian and $\bsf{H}_B$ denote the mixer Hamiltonian. Specifically,
\begin{align}
    \bsf{H}_C &= \frac{1}{2} \sum_{(i,j) \in E} (1-\sigma_i^{z}\sigma_j^{z}) \label{eq:cost_op}\,,\\
    \bsf{H}_B &= \sum_{i} \sigma_i^{x} \label{eq:mixer_op}\,,
\end{align}
with $\bsf{H}_C$ encoding an eight-vertex graph with the edge set $E = \left\{(1,2),(2,3),(3,4),(4,5),(5,6),(6,7),(7,8),(8,1)\right\}$, 
and $\bsf{H}_B$ being the sum of tranverse fields. 

The variational quantum state is prepared by applying a unitary transformation composed of $p$ blocks of phase-separating unitaries 
$\left\{U_{\mathrm{C}}(\gamma_k) = e^{-i \gamma_k \bsf{H}_{\mathrm{C}}} \right\}_{k=1}^p$ 
and mixing unitaries 
$\left\{U_{\mathrm{B}}(\beta_k) = e^{-i \beta_k \bsf{H}_{\mathrm{B}}} \right\}_{k=1}^p$ to an initial state, such that the state after the $p$ blocks is given by
\begin{equation}
\label{eq:p_blocks}
\ket{\psi_p(\vec{\gamma},\vec{\beta})} = \prod_{k=1}^p U_{\mathrm{B}}(\beta_k) U_{\mathrm{C}}(\gamma_k) \ket{\psi_0} \,.
\end{equation}
The phases of these unitaries, $\left\{\beta_k, \gamma_k\right\}_{k=1}^p$, are the variational parameters to be optimized based on the quantum state outputs of the circuit. 

Each unitary is equivalent to integrating the Schr\"{o}dinger equation for one unit of time. Therefore, the QAOA algorithm can be executed on the GPAC by applying Algorithm~\ref{alg:sim1} repeatedly, specifically $2p$ times. Instead of implementing the $2p$ unitaries with quantum gates in a quantum circuit, one can set up a GPAC where the internal components are calibrated with the phases:
\begin{align}
\bsf{K}_{\gamma_k,C} &= (\gamma_{k}\bsf{H}_C)^2 \,,\\
 \bsf{K}_{\beta_k,B} &= (\beta_{k}\bsf{H}_B)^2 \,. 
\end{align}

Therefore, one can wrap $2p$ sequences of  Algorithm~\ref{alg:sim1} in a program to simulate the final output of 
$\left\{\bsf{K}_{\gamma_k,C}\rightarrow\bsf{K}_{\beta_k,B}\right\}_{k=1}^{p}$. The initial state of each sequence is the output of the previous one. The initial first derivatives $\ket{\dot{\varphi}(0)}$ of each sequence is calculated by the initial real state $\ket{\varphi(0)}$ and the corresponding calibrated Hamiltonian.
Compared to gate-based circuit, the real state $\ket{\varphi(t)}$ can be monitored in real time without state collapse during the GPAC operation. The evolution of objective value (i.e., number of cuts in the graph) can be computed efficiently as
\begin{equation}
\langle \varphi(t)| \bsf{H}_C \oplus \bsf{H}_C|\varphi(t)\rangle \,,
\end{equation}
with $\bsf{H}_C \oplus \bsf{H}_C$ being a real diagonal matrix.

For our example, the initial state $\ket{\psi_0}$ is the uniform superposition  state $\ket{+}^{\otimes 8}$ and $p=2$. Therefore, Eq.~\ref{eq:p_blocks} becomes
\begin{equation}
\label{eq:2_blocks}
\ket{\psi_2(\gamma_2,\gamma_1,\beta_2,\beta_1)} = \prod_{k=1}^2 U_{\mathrm{B}}(\beta_k) U_{\mathrm{C}}(\gamma_k) \ket{+}^{\otimes 8} \,.
\end{equation} 
We want to find the maximum expectation value of $\bsf{H}_C$ with $\ket{\psi_2(\gamma_2,\gamma_1,\beta_2,\beta_1)}$, among all possible phases. The range of phases to search for are $\beta_1 \in \left[-1, 1\right], \beta_2 \in \left[-1, 1\right], \gamma_1 \in \left[-1, 1\right]$ and $\gamma_2 \in \left[-1, 1\right]$. The optimal solution  is $\{\gamma_2^\ast=0.5,\gamma_1^\ast=-1,\beta_2^\ast=0.8,\beta_1^\ast=0.4\}$, with 
$\langle \psi_2(\vec{\gamma}^\ast,\vec{\beta}^\ast)|\bsf{H}_C|\psi_2(\vec{\gamma}^\ast,\vec{\beta}^\ast)\rangle \approx 6.6$, where the expectation value is optimal. 

We numerically demonstrate an analog classical simulation scheme comprising four repetitions of Algorithm~\ref{alg:sim1}, each lasting one unit of time, to solve the QAOA problem in an alternative and equivalent manner. Fig.~\ref{fig:realandfirstderivative} shows the time evolution of the first matrix element of $\ket{\varphi (t)}$ and the first matrix element of $\ket{\dot{\varphi} (t)}$ during the four sequences of Algorithm~\ref{alg:sim1}. Note that each sequence depends on its phase and we focus on the phases of $\gamma_2=0.5,\gamma_1=-1,\beta_2=0.8,\beta_1=0.4$, which are also the optimal phases. 
While $\ket{\varphi (t)}$ remains continuous during the GPAC sequences, $\ket{\dot{\varphi} (t)}$ is reset at the beginning of each sequence.

\begin{figure}
    \centering
\includegraphics[width=0.5\textwidth]{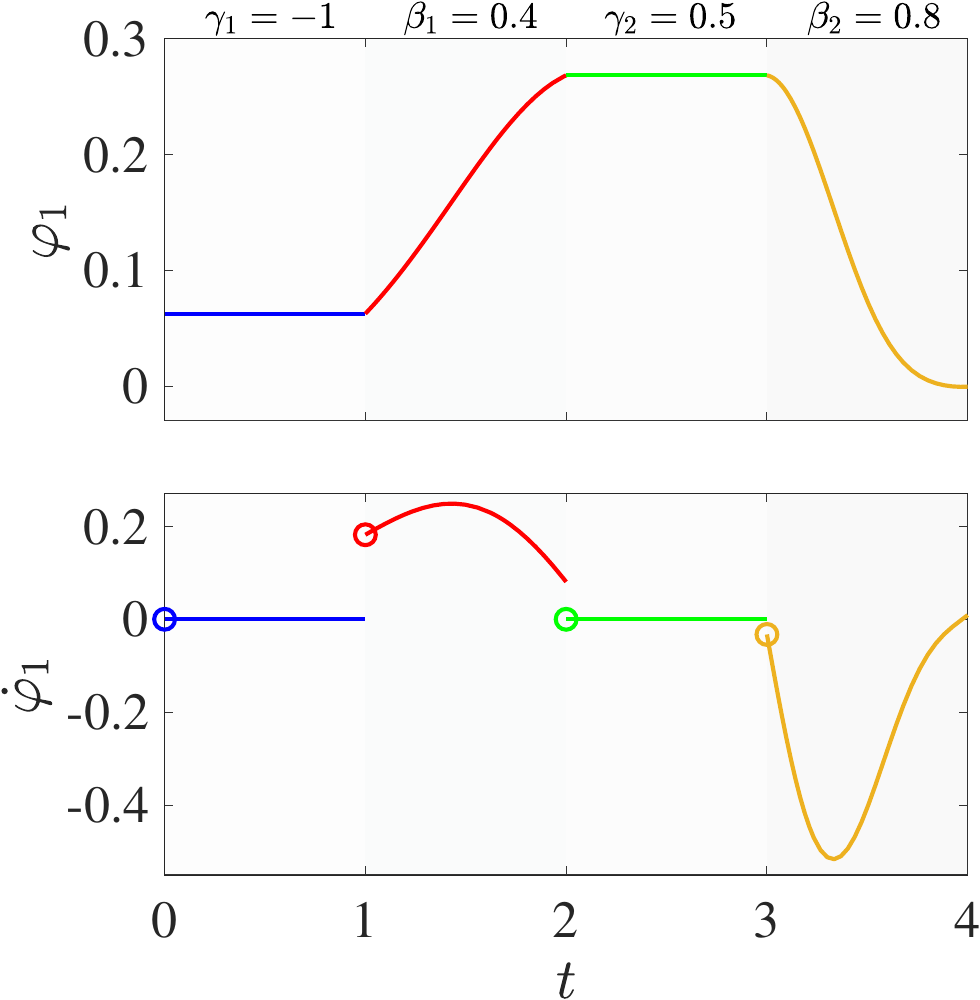}
    \caption{
    The time evolution of the first matrix element of $\ket{\varphi (t)}$ and the first matrix element of $\ket{\dot{\varphi} (t)}$ during the four sequences of Algorithm~\ref{alg:sim1}, for the phases of 
$\gamma_2^\ast=0.5,\gamma_1^\ast=-1,\beta_2^\ast=0.8,\beta_1^\ast=0.4$. While the initial state of each sequence is the output of the previous seqeuence, the initial first derivative is reset along with the inner workings of Algorithm~\ref{alg:sim1}.} 
\label{fig:realandfirstderivative}
\end{figure}

As shown in Fig.~\ref{fig:expectation_value_evolution}, one can reveal the time evolution of the expectation value throughout the GPAC operation (from start to finish). We again focus on the optimal solution. Fig.~\ref{fig:heatmap} shows the heatmap of $\langle \varphi(4)| \bsf{H}_C \oplus \bsf{H}_C|\varphi(4)$ at the end of the GPAC operation, for every possible phases of $\gamma_1 \in [-1,1], \beta_1 \in [-1,1]$ with $\gamma_2 = 0.5, \beta_2 = 0.8$. This heatmap generated from the GPAC is exactly the same as the heatmap of expectation values of $\bsf{H}_C$ from the quantum unitary sequence in Eq.~\ref{eq:2_blocks} (not shown). 

\begin{figure*}[htb!]  
\label{fig:}
  \subfigure[]{
\includegraphics[width=0.939\columnwidth]{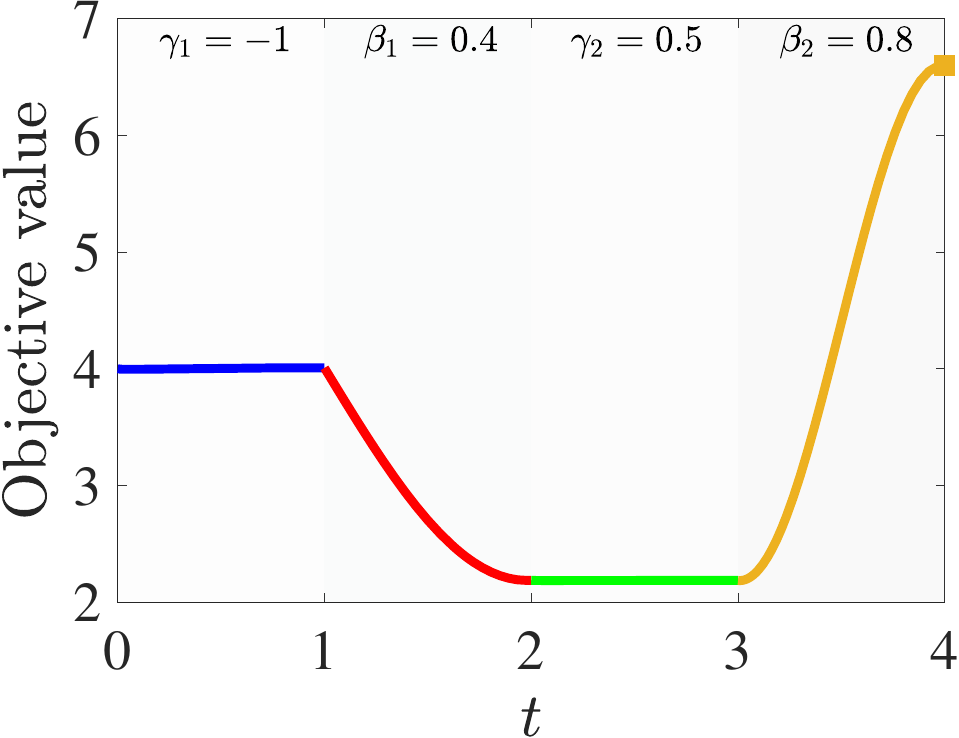}  
\label{fig:expectation_value_evolution}
    }
    \hspace{5mm}
  \subfigure[]{
\includegraphics[width=0.939\columnwidth]{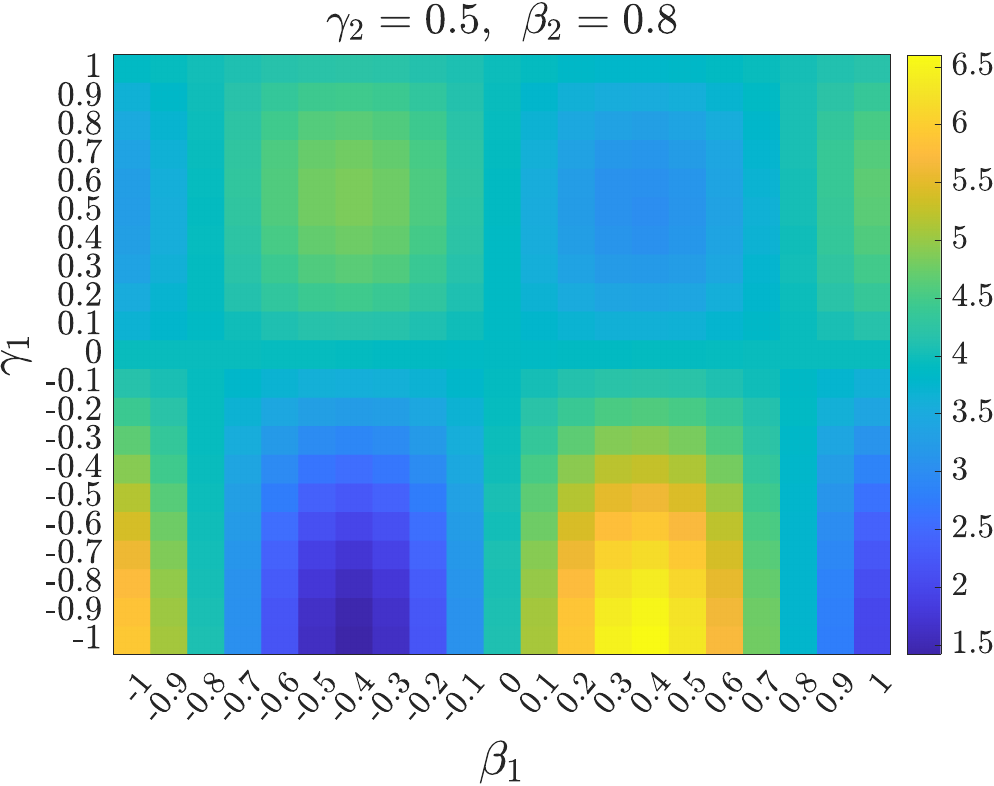} 
    \label{fig:heatmap}
    }
  \caption{(a) The time evolution of objective value $\langle \varphi(t)| \bsf{H}_C \oplus \bsf{H}_C|\varphi(t)\rangle$ during the four sequences of Algorithm~\ref{alg:sim1}, for the phases of $\gamma_2^\ast=0.5,\gamma_1^\ast=-1,\beta_2^\ast=0.8,\beta_1^\ast=0.4$. At the end of the last sequence, the objective value goes up to around $6.6$. (b) The heatmap of objective values $\langle \varphi(4)| \bsf{H}_C \oplus \bsf{H}_C|\varphi(4)\rangle$ at the end of the GPAC operations, for every possible phases of $\gamma_1 \in [-1,1], \beta_1 \in [-1,1]$ with $\gamma_2^\ast = 0.5, \beta_2^\ast = 0.8$.} 
\end{figure*}

\section{Discussions and Conclusions}
\label{sec:complexity}
The analog classical simulation algorithm requires the knowledge of the $\bsf{K}(t)$, to set up the rules of evolution of the simulator. For a time-independent Hamiltonian\footnote{For a time-dependent Hamiltonian, this requires also the knowledge of the first derivative.}, this requires the knowledge of the square of Hamiltonian $\bsf{K} = \bsf{H}^2$. There are several possible scenarios in which such knowledge can be obtained. The most probable one is that the analytical expressions of the Hamiltonian is known, in which case deriving an analytical form of  $\bsf{H}^2$ should not be difficult. 
Another possibility is that the Hamiltonian is given as a sum of Pauli strings, in which case the complexity of computing  $\bsf{H}^2$ is polynomial to the number of terms. The final possibility, which is rare, is that the Hamiltonian is given as a numerical matrix. In that case, we can only square it with brute force. 

We here compare the numerical complexity of squaring a Hamiltonian $\bsf{H}^2$ vs exponentiation $e^{\bsf{H}}$, as a streamlined approach to compare the efficiency of analog classical simulation vs digital classical simulation\footnote{Rigoriously one should compare $\bsf{H}^2$ vs $e^{-i\bsf{H}}$.}. Note that an analog computer can also perform basic arithmetic operations~\cite{blanc2023simulation}, but we do not consider such a case. We benchmark the comparison with the following Hamiltonian.
\begin{equation}
    \bsf{H} = \sum_{i=1}^{n}\sigma_i^x + \sum_{i=1}^{n}\sigma_i^z \,.
\end{equation}
We compare the brute-force computation time ratio $r$ of matrix exponential vs squaring, assuming we do not know the analytical expressions of these operations.
We compute the numerical matrix exponential using a \textsf{fastexpm}~\cite{fastexpm} function. The numerical evaluations were performed on a 10-core Apple M1 Pro chip with a 16 GB memory. Fig.~\ref{fig:timing} shows that the computation ratio for $n = \{12,13,14,15,16\}$. While the theoretical complexity of both the brute-force matrix exponentiations and matrix squarings should be exponential in $n$, the timing data in Fig.~\ref{fig:timing} seems to suggest that for practical calculations on sparse matrices, matrix squaring is much more computationally efficient, and scale more efficiently in $n$.

\begin{figure}[htb!]
    \centering
    \includegraphics[width=0.77\columnwidth]{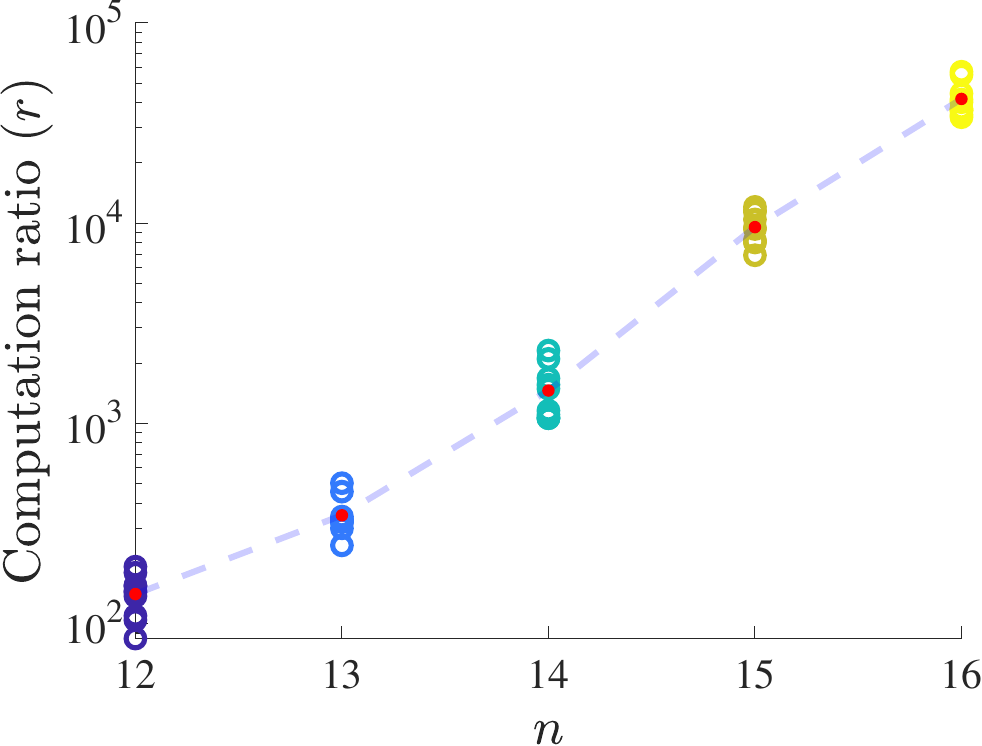}
    \caption{Computation ratio $r$ here refers to the computation time of sparse matrix exponentiation vs sparse matrix squaring. The y-axis is in log scale. Due to random execution background factors, we performed 10 repeated trials for each number of qubits $n$, with the mean values shown as red dots.}
    \label{fig:timing}
\end{figure}

In this work, we have shown how a closed quantum system can be simulated by an analog computer, which in the mathematical form, is a GPAC. The most important steps are perhaps mapping the Schr\"{o}dinger equation to a system of real ODEs, and how to map the real ODEs to classical physical features such as a stiffness matrix. While we outlined how to do the above steps and how to encode and realize such simulation in a second-order linear dynamical system and a spring-mass system, there exist multiple analogies of the dynamical equations (see Appendix~\ref{app:physics}, or Ref.~\cite{hasler2016opportunities}), shared by electrical and biological systems.
Many have related quantum computing to analog computing~\cite{laughlin2005different,ferry2003we, kish2003hilbert}. This is the first time to systematically study how to use the analog computers to simulate a general, including time-dependent, quantum dynamics.

Analog computers has been deemphasized since the Church-Turing thesis~\cite{bournez2013turing} and the success of digital computers. We hope that this work will motivate further research into this nearly obsolete computational model and, more broadly, into physical computing based on real variables for the exploration of quantum phenomena. One goal of quantum simulation is to simulate a Hamiltonian with another one in a well-controlled system. The analog devices might offer a more stable and robust-to-noise approach, thus allowing for a simulation longer than the decoherence timescale of quantum computers. The broad energy range of the analog devices may also offer an opportunity for the fast forwarding of quantum dynamics. 

\section{Acknowledgement}
We gratefully thank Dr. Daniel T. Kawano for discussions. This research received no specific grant from any funding agency in the public, commercial, or not-for-profit sectors. The numerical experiments were carried out on a personal laptop using codes in \url{https://github.com/kwyip/acs}.

\bibliography{refs}

\onecolumngrid
\appendix
\section{Encoding probability amplitudes as velocity: First-order ODE and damping}
\label{app:firstorder}
To convert the Schr\"{o}dinger equation into a system of first-order real-valued ODEs
\begin{align}
\label{Eq:Schr_app}
 \ket{\dot{\psi}(t)} &= -i \bm{\mathsf{H}}(t) \sket{\psi(t)} \,,
\end{align}

Let $\bsf{J}(t)$ denote $i\bsf{H}(t)$ such that
\begin{equation}
\label{Eq:Schr_real1st}
 \ket{\dot{\psi}(t)} = - \bsf{J}(t) \sket{\psi(t)} \,.
\end{equation}
Eq.~\ref{Eq:Schr_real1st} can be turned into a real form of
\begin{equation}
\label{Eq:Schr_real1st_2}
 \ket{\dot{\varphi}(t)} = - \bcal{J}(t) \sket{\varphi(t)} \,,
\end{equation}
where $\sket{\varphi(t)} = \begin{bmatrix}
    \Re{\ket{\psi(t)}} \\
    \Im{\ket{\psi(t)}}
    \end{bmatrix}$ and $\bcal{J}(t) = \begin{bmatrix}
  \Re{\bsf{J}(t)} & -\Im{\bsf{J}(t)} \\
  \Im{\bsf{J}(t)} & \Re{\bsf{J}(t)}
\end{bmatrix} = 
\begin{bmatrix}
  -\Im{\bsf{H}(t)} & -\Re{\bsf{H}(t)} \\
  \Re{\bsf{H}(t)} & -\Im{\bsf{H}(t)}
\end{bmatrix} 
$.

\begin{proof}
\begin{align}
     \ket{\dot{\varphi}(t)} = \frac{d}{dt}\begin{bmatrix}
    \Re{\ket{\psi(t)}} \\
    \Im{\ket{\psi(t)}}
    \end{bmatrix}
    &= \begin{bmatrix}
    \Re{\ket{\dot{\psi}(t)}} \\
    \Im{\ket{\dot{\psi}(t)}}
    \end{bmatrix} \\
    &= \begin{bmatrix}
    \Re{-i\bsf{H}(t)\sket{\psi(t)}} \\
    \Im{-i\bsf{H}(t)\sket{\psi(t)}}
    \end{bmatrix} \\
    &= \begin{bmatrix}
    \Re{-i (\Re{\bsf{H}(t)} + i\Im{\bsf{H}(t)})(\Re{\sket{\psi(t)}}+i\Im{\sket{\psi(t)}}) } \\
    \Im{-i (\Re{\bsf{H}(t)} + i\Im{\bsf{H}(t)})(\Re{\sket{\psi(t)}}+i\Im{\sket{\psi(t)}}) }
    \end{bmatrix} \\
    &= \begin{bmatrix}
    \Im{\bsf{H}(t)}\Re{\sket{\psi(t)}} + \Re{\bsf{H}(t)}\Im{\sket{\psi(t)}} \\
    -\Re{\bsf{H}(t)}\Re{\sket{\psi(t)}}+\Im{\bsf{H}(t)}\Im{\sket{\psi(t)}}
    \end{bmatrix} \\
    &= \left[\begin{array}{cc}
\Im{\bmsf{H}(t)} & \Re{\bmsf{H}(t)} \\
-\Re{\bmsf{H}(t)} & \Im{\bmsf{H}(t)}
\end{array}\right]\left[\begin{array}{c}
\Re{\ket{\psi(t)}} \\
\Im{\ket{\psi(t)}}
\end{array}\right] \\
&= - \bcal{J}(t) \sket{\varphi(t)}
\end{align}
\end{proof}

If $\bsf{H}(t)$ is a real Hamiltonian, $\bcal{J}(t)$ is antisymmetric. For a real Hamiltonian, it might be possible to simulate the dynamics with artificial systems, such as an antisymmetric recurrent neural network~\cite{antisymmetricrnn}. 

\section{Proof of equivalence}
\label{app:proof_of_equivalence}
\begin{proof}
If we have a solution $\ket{\psi(t)}$ to Eq.~\ref{eq:2ndschro}
\begin{equation}
    \ket{\ddot{\psi}(t)} = - \bmsf{K}(t)\ket{\psi(t)} \,.
\end{equation}
The quantum state $\ket{\psi(t)}$ also satisfies
\begin{equation}
     \ket{\ddot{\psi}(t)}  = -i {\dot{\bm{\mathsf{H}}}}(t) \ket{\psi(t)} - \bm{\mathsf{H}}(t)^2 \ket{\psi(t)}\,.
\end{equation}
\begin{align}
    \ket{\dot{\psi}(t)}- \ket{\dot{\psi}(0)} &= \int_0^{t}-i {\dot{\bm{\mathsf{H}}}}(t) \ket{\psi(t)} - \bm{\mathsf{H}}(t)^2 \ket{\psi(t)} dt \nonumber \\
    &= \int_0^{t} -i {\dot{\bm{\mathsf{H}}}(t)} \sket{\psi(t)} -i{\bm{\mathsf{H}}(t)} \sket{\dot{\psi}(t)}dt \nonumber \\
    &= -i\int_0^{t}  {\dot{\bm{\mathsf{H}}}(t)} \sket{\psi(t)} + {\bm{\mathsf{H}}(t)} \sket{\dot{\psi}(t)}dt \,.
\end{align}
Let $T(t) = \bsf{H}(t)\sket{\psi(t)}$; it follows that
\begin{align}
    \ket{\dot{\psi}(t)}- \ket{\dot{\psi}(0)} 
    &= -i\int_0^{t} \dot{T}(t)dt = -i\bsf{H}(t)\sket{\psi(t)} + i \bsf{H}(0)\sket{\psi(0)} \,.
\end{align}
Therefore,
\begin{align}
    \ket{\dot{\psi}(t)} = -i\bsf{H}(t)\sket{\psi(t)} + i \bsf{H}(0)\sket{\psi(0)} +  \ket{\dot{\psi}(0)} \,.
\end{align}
We get back Eq.~\ref{eq:Schr} if $i \bsf{H}(0)\sket{\psi(0)} +  \ket{\dot{\psi}(0)}  = 0$, which is the Schr\"odinger equation at $t=0$. 
\end{proof}

\section{Calculating \texorpdfstring{$\bcal{K}(t)$}{bcalKt}}
\label{app:square}
We show here that there are two ways to derive $\bcal{K}(t)$. In both approaches, we are given a time-dependent Hamiltonian $\bsf{H}(t)$ and seek to derive $\bcal{K}(t)$ from it. Note that $\bsf{H}(t)$ is generally a time dependent complex matrix. 

The first approach is to calculate $\bsf{K}(t) = i\dot{\bsf{H}}(t) + \bsf{H}(t)^2$. Then proceed with 
\begin{equation}
\bsf{K}(t) \mapsto
\bcal{K}(t) = 
\begin{bmatrix}
  \Re{\bsf{K}(t)} & -\Im{\bsf{K}(t)} \\
  \Im{\bsf{K}(t)} & \Re{\bsf{K}(t)}
\end{bmatrix} \,.
\end{equation}
The second approach is to first turn $\bsf{H}(t)$ real before squaring, i.e.,
\begin{equation}
\bsf{H}(t) \mapsto
\bcal{H}(t) = 
\begin{bmatrix}
  \Re{\bsf{H}(t)} & -\Im{\bsf{H}(t)} \\
  \Im{\bsf{H}(t)} & \Re{\bsf{H}(t)}
\end{bmatrix} \,.
\end{equation}

We depict the two approaches in a flow chart (Fig.~\ref{fig:branchchart}).
\begin{figure}[htb!]
    \centering
    \includegraphics[width=0.77\columnwidth]{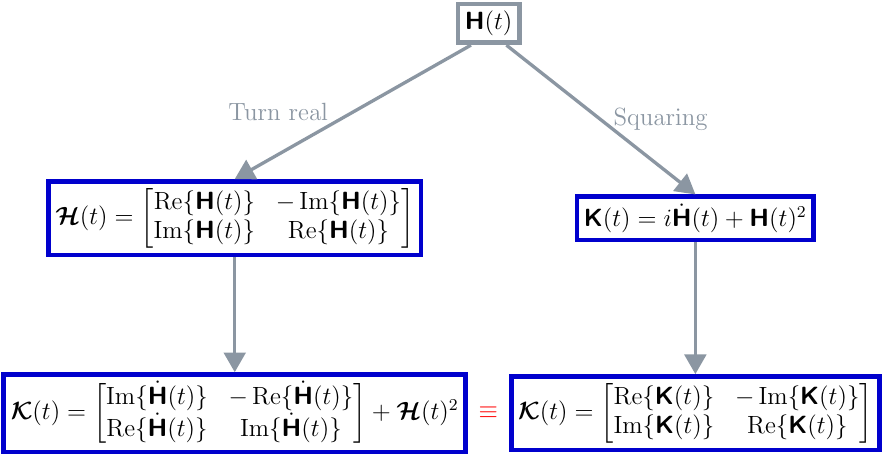}
    \caption{Two different approaches of calculating $\bcal{K}(t)$.}
    \label{fig:branchchart}
\end{figure}

The two approaches lead to the same $\bcal{K}(t)$.   

\section{Proof of equivalence: Real vs complex}
\label{app:proof_of_equivalence_real}
Recall that Eq.~\ref{eq:2ndschro} is 
\begin{equation}
\ket{\ddot{\psi}(t)} = - \bm{\mathsf{K}}(t)|\psi(t)\rangle \,.   
\end{equation}
We want to prove that Eq.~\ref{eq:realform},
\begin{align}
    \sket{\ddot{\varphi}(t)} + \bcal{K}(t)\sket{\varphi(t)} = 0
\end{align}
is equivalent to Eq.~\ref{eq:2ndschro}.
\begin{proof}
\begin{align}
    \sket{\ddot{\varphi}(t)} = \frac{d^2}{dt^2}\begin{bmatrix}
    \Re{\ket{\psi(t)}} \\
    \Im{\ket{\psi(t)}}
    \end{bmatrix}
    &= \begin{bmatrix}
    \Re{\ket{\ddot{\psi}(t)}} \\
    \Im{\ket{\ddot{\psi}(t)}}
    \end{bmatrix} \\
    &= -\begin{bmatrix}
    \Re{\bsf{K}(t)\sket{\psi(t)}} \\
    \Im{\bsf{K}(t)\sket{\psi(t)}}
    \end{bmatrix} \\
    &= -\begin{bmatrix}
    \Re{(\Re{\bsf{K}(t)} + i\Im{\bsf{K}(t)})(\Re{\sket{\psi(t)}}+i\Im{\sket{\psi(t)}}) } \\
    \Im{(\Re{\bsf{K}(t)} + i\Im{\bsf{K}(t)})(\Re{\sket{\psi(t)}}+i\Im{\sket{\psi(t)}}) }
    \end{bmatrix} \\
    &= -\begin{bmatrix}
    \Re{\bsf{K}(t)}\Re{\sket{\psi(t)}} - \Im{\bsf{K}(t)}\Im{\sket{\psi(t)}} \\
    \Im{\bsf{K}(t)}\Re{\sket{\psi(t)}} + \Re{\bsf{K}(t)}\Im{\sket{\psi(t)}}
    \end{bmatrix} \\
    &= -\left[\begin{array}{cc}
\Re{\bmsf{K}(t)} & -\Im{\bmsf{K}(t)} \\
\Im{\bmsf{K}(t)} & \Re{\bmsf{K}(t)}
\end{array}\right]\left[\begin{array}{c}
\Re{\ket{\psi(t)}} \\
\Im{\ket{\psi(t)}}
\end{array}\right] \\
&= - \bcal{K}(t) \sket{\varphi(t)}
\end{align}
\end{proof}

\section{Criterion for \texorpdfstring{$\bsf{K}(t)$}{Kt} to be real}
\label{app:real_symmetric_k}
Note that
\begin{align}
\label{eq:ktexpanded}
    \bsf{K}(t) &= i\dot{\bsf{H}}(t) + \bsf{H}(t)^2 \nonumber\\
    &= i\left(\Re{\dot{\bsf{H}}(t)} + i\Im{\dot{\bsf{H}}(t)}\right) + \Re{\bsf{H}(t)}^2 + i\Re{\bsf{H}(t)}\Im{\bsf{H}(t)} + i\Im{\bsf{H}(t)}\Re{\bsf{H}(t)} - \Im{\bsf{H}(t)}^2 \nonumber\\
    &= i\left(\Re{\dot{\bsf{H}}(t)} + \Re{\bsf{H}(t)}\Im{\bsf{H}(t)} + \Im{\bsf{H}(t)}\Re{\bsf{H}(t)}\right) + \Re{\bsf{H}(t)}^2 - \Im{\bsf{H}(t)}^2 - \Im{\dot{\bsf{H}}(t)} \,.
\end{align}
Therefore, $\bsf{K}(t)$ is real if and only if $\Re{\dot{\bsf{H}}(t)} + \Re{\bsf{H}(t)}\Im{\bsf{H}(t)} + \Im{\bsf{H}(t)}\Re{\bsf{H}(t)} = 0$. 
Because $\bsf{H}(t)$ is  Hermitian, the component $\Re{\bsf{H}(t)}$ is symmetric and $\Im{\bsf{H}(t)}$ is skew-symmetric. Note that 
\begin{equation*}
    \Re{\bsf{H}(t)}\Im{\bsf{H}(t)} + \Im{\bsf{H}(t)}\Re{\bsf{H}(t)}
\end{equation*}
is skew-symmetric because
\begin{align}
    (\Re{\bsf{H}(t)}\Im{\bsf{H}(t)} + \Im{\bsf{H}(t)}\Re{\bsf{H}(t)})^{T} &= \Im{\bsf{H}(t)}^{T}\Re{\bsf{H}(t)}^{T} + \Re{\bsf{H}(t)}^{T}\Im{\bsf{H}(t)}^{T} \nonumber\\
    &= (-\Im{\bsf{H}(t)})\Re{\bsf{H}(t)} + \Re{\bsf{H}(t)}(-\Im{\bsf{H}(t)}) \nonumber\\
    &= -(\Re{\bsf{H}(t)}\Im{\bsf{H}(t)} + \Im{\bsf{H}(t)}\Re{\bsf{H}(t)}) \,.
\end{align}
On the other hand, $\dot{\bsf{H}}(t)$ is Hermitian so $\Re{\dot{\bsf{H}}(t)}$ is symmetric. Therefore, \[\Re{\dot{\bsf{H}}(t)} + \Re{\bsf{H}(t)}\Im{\bsf{H}(t)} + \Im{\bsf{H}(t)}\Re{\bsf{H}(t)} = 0\] if and only if $\Re{\dot{\bsf{H}}(t)} = 0$ and $\Re{\bsf{H}(t)}\Im{\bsf{H}(t)} + \Im{\bsf{H}(t)}\Re{\bsf{H}(t)} = 0$.
The latter condition is also equivalent to the following:
\begin{align}
    &\phantom{{}\iff\kern 0.15557em}\Re{\bsf{H}(t)}\Im{\bsf{H}(t)} + \Im{\bsf{H}(t)}\Re{\bsf{H}(t)} = 0 \nonumber\\
    &\iff \Re{\bsf{H}(t)}\Im{\bsf{H}(t)} = -\Im{\bsf{H}(t)}\Re{\bsf{H}(t)} \nonumber\\
    &\iff \Re{\bsf{H}(t)}\Im{\bsf{H}(t)} = (\Re{\bsf{H}(t)}\Im{\bsf{H}(t)})^{T} \,,
\end{align}
i.e., $\Re{\bsf{H}(t)}\Im{\bsf{H}(t)}$ is symmetric.
Therefore, a real $\bsf{K}(t)$ is equivalent to:
\begin{enumerate}
    \item $\Re{\dot{\bsf{H}}(t)} = 0$.
    \item Symmetric $\Re{\bsf{H}(t)}\Im{\bsf{H}(t)}$.
\end{enumerate}

If further $\Im{\dot{\bsf{H}}(t)} = 0$ in Eq.~\ref{eq:ktexpanded}, the matrix $\bsf{K}(t)$ is real symmetric:
\begin{align}
    \bsf{K}(t) = (\Re{\bsf{H}(t)}^2 - \Im{\bsf{H}(t)}^2)^{T} 
    &= \Re{\bsf{H}(t)}^{T}\Re{\bsf{H}(t)}^{T} - \Im{\bsf{H}(t)}^{T}\Im{\bsf{H}(t)}^{T} \nonumber\\
    &= \Re{\bsf{H}(t)}\Re{\bsf{H}(t)} - (-\Im{\bsf{H}(t)})(-\Im{\bsf{H}(t)}) \nonumber\\
    &= \Re{\bsf{H}(t)}^2 - \Im{\bsf{H}(t)}^2 \,.
\end{align}

\subsection{Example}
\begin{itemize}
    \item Any real symmetric $\bsf{H}$ leads to a real symmetric $\bsf{K}$.
    \item For
\begin{equation}
\bsf{H}(t) = \left[\begin{array}{cc}
1 & 1-2ti \\
1+2ti & -1
\end{array}\right] \,,
\end{equation}
\begin{equation}
\bsf{K}(t) = \left[\begin{array}{cc}
2-4t^2 & 2 \\
-2 & 2-4t^2
\end{array}\right] \,,
\end{equation}
which is real but not symmetric.
    \item For 
\begin{equation}
\bsf{H} = \left[\begin{array}{ccc}
1 & -i & 2i\\
i & -1 & 0\\
-2i & 0 &  -1
\end{array}\right] \,,
\end{equation}
\begin{equation}
\bsf{K} = \left[\begin{array}{ccc}
6 & 0 & 0\\
0 & 2 & -2\\
0 & -2 &  5
\end{array}\right] \,,
\end{equation}
which is real and symmetric.
\end{itemize}

\section{Proof of \texorpdfstring{$\bcal{K}(t)$}{bcalKt} is real symmetric when \texorpdfstring{$\bsf{K}(t)$}{bsfKt} is Hermitian}
\label{app:simpleproof_2}
If $\bsf{K}(t) = \Re{\bsf{K}(t)} + i\Im{\bsf{K}(t)}$ is Hermitian, $\Re{\bsf{K}(t)}$ is symmetric and $\Im{\bsf{K}(t)}$ is skew-symmetric. The transpose of the block matrix $\bcal{K}(t)$ is
\begin{equation}
\bcal{K}(t)^{T} = \left[\begin{array}{cc}
\Re{\bm{\mathsf{K}}(t)} & -\Im{\bm{\mathsf{K}}(t)} \\
\Im{\bm{\mathsf{K}}(t)} & \Re{\bm{\mathsf{K}}(t)}
\end{array}\right]^{T} = \left[\begin{array}{cc}
\Re{\bm{\mathsf{K}}(t)}^{T} & \Im{\bm{\mathsf{K}}(t)}^{T} \\
-\Im{\bm{\mathsf{K}}(t)}^{T} & \Re{\bm{\mathsf{K}}(t)}^{T}
\end{array}\right] = \left[\begin{array}{cc}
\Re{\bm{\mathsf{K}}(t)} & -\Im{\bm{\mathsf{K}}(t)} \\
\Im{\bm{\mathsf{K}}(t)} & \Re{\bm{\mathsf{K}}(t)}
\end{array}\right] = \bcal{K}(t)\,.
\end{equation}
Therefore $\bcal{K}(t)$ is real symmetric.

\section{Proof of \texorpdfstring{$\bcal{K}(t)$}{bcalKt} is positive semidefinite when \texorpdfstring{$\bsf{K}(t)$}{bsfKt} is positive semidefinite}
\label{app:simpleproof_3}
Although in the main text we want to prove that a positive semidefinite $\bsf{K}$ implies a positive semidefinite $\bcal{K}$, we add time-dependence to the operators to make the proof more general. The proof holds at each instant $t$.

If $\bsf{K}(t)$ is positive semidefinite, 
$\braket{\psi(t)|\bsf{K}(t)|\psi(t)} \geq 0$ for all $\ket{\psi(t)} \neq 0$. This implies that
\begin{align}
&\phantom{\implies.} \braket{\psi(t)|\bsf{K}(t)|\psi(t)} \geq 0 \,\,\,\,\,\text{for all $\ket{\psi(t)}\neq 0$}\nonumber\\
&\implies (\Re{\bra{\psi(t)}}-i\Im{\bra{\psi(t)}})(\Re{\bsf{K}(t)} + i\Im{\bsf{K}(t)})(\Re{\ket{\psi(t)}}+i\Im{\ket{\psi(t)}}) \geq 0 \,\,\,\,\,\text{for all $\ket{\psi(t)}\neq 0$}\nonumber \\
&\implies \begin{bmatrix}
\Re{\ket{\psi(t)}} \hspace{1ex}
\Im{\ket{\psi(t)}}
\end{bmatrix}\hspace{-1ex}\begin{bmatrix}
\Re{\bsf{K}(t)} & -\Im{\bsf{K}(t)} \\
\Im{\bsf{K}(t)} & \Re{\bsf{K}(t)}
\end{bmatrix}
\hspace{-1ex}
\begin{bmatrix}
\Re{\ket{\psi(t)}} \\
\Im{\ket{\psi(t)}}
\end{bmatrix}
\geq 0 \,\,\,\text{for all $\ket{\psi(t)}\neq 0$}\nonumber \\
&\implies \braket{\varphi(t)|\bcal{K}(t)|\varphi(t)} \geq 0 \,\,\,\,\,\text{for all $\ket{\varphi(t)}\neq 0$ (Every $\ket{\varphi(t)}$ can be expressed as $\begin{bmatrix}
\Re{\ket{\psi(t)}} \\
\Im{\ket{\psi(t)}}
\end{bmatrix}$ for some $\ket{\psi(t)}$)}. 
\end{align}
Therefore, $\braket{\varphi(t)|\bcal{K}(t)|\varphi(t)} \geq 0$ for all $\ket{\varphi(t)} \neq 0$; i.e., $\bcal{K}(t)$ is positive semidefinite.

\section{Spring-mass equation as energy minimization}
\label{app:spring_mass_equation_derivation}
We derive the spring-mass equation from the Hamilton's principle. There are $N$ masses $m_{i}$, and $N(N+1)/2$ weightless springs each with spring constant $\kappa_{ij}$, where $1 \leq i,j \leq N$. Assuming that
the kinetic energy of the spring-mass system to be
\begin{equation}
   T = \frac{1}{2}\sum_i m_{i}\dot{x}_{i}(t)^2 \,,
\end{equation}
and that the potential energy is given by
\begin{equation}
   V =  \frac{1}{2}\sum_i\left(\sum_{j>=i}\kappa_{ij} x_{ij}(t)^2\right) \,.
\end{equation}
The Lagrangian of the spring-mass system is
\begin{equation}
\label{eq:springmasslagrangian_1}
    L = T - V = \frac{1}{2}\sum_i m_{ii}\dot{x}_{ii}(t)^2 - \frac{1}{2}\sum_i\left(\kappa_{ii} x_{ii}(t)^2 + \sum_{j>i}\kappa_{ij} x_{ij}(t)^2\right) \,.
\end{equation}

Here $x_{ij}(t) = x_i(t) - x_j(t)$, and $x_{ii}(t) = x_i(t)$.
Here $x_{ij}(t) = x_i(t) - (1-\delta_{ij})x_j(t)$, where $\delta_{ij} = 1$ if $i=j$, else $\delta_{ij} =0$. Eq.~\ref{eq:springmasslagrangian_1} can be rewritten as
\begin{equation}
\label{eq:springmasslagrangian_2}
    L = \frac{1}{2}\sum_i m_{ii}\dot{x}_{i}(t)^2 - \frac{1}{2}\sum_i\left(\kappa_{ii} x_{i}(t)^2 + \sum_{j>i}\kappa_{ij} (x_i(t)-x_j(t))^2\right)  \,.
\end{equation}
For each mass $i$, the Euler-Lagrange equation (E-L) equation is
\begin{equation}
    \frac{d}{dt}\left(\frac{\partial L}{\partial \dot{x}_i}\right) = \frac{\partial L}{\partial x_i} \,,
\end{equation}
--- which in this case is
\begin{align}
m_{ii}\ddot{x}_{i} &=  -\kappa_{ii}x_i -
\sum_{i > j}\kappa_{ji}(x_j - x_i)(-1) - 
\sum_{j > i}\kappa_{ij}(x_i - x_j)  \nonumber\\
&=  -\kappa_{ii}x_i -
\sum_{i \neq j}\kappa_{ij}(x_i - x_j) \quad\quad\quad\quad\quad\quad\quad\quad\quad\quad\text{(since $\kappa_{ij} = \kappa_{ji}$)}\,.
\end{align}

Thus we arrive at the following equation:
\begin{equation}
\label{eq:2ndorderODE_spring_appendix}
    \bsf{M}\ddot{\pmb{x}}(t) + \bsf{S}\pmb{x}(t) = 0\,,
\end{equation}
where $\bsf{M}_{ii} = m_{ii}$, $\bsf{M}_{ij} = 0$, $\bsf{S}_{ii} = \sum_{j}\kappa_{ij}$ and $\bsf{S}_{ij} = -\kappa_{ij}$.

\section{Numerical recipe for experiments}
\label{app:numericalrecipe}
From Eq.~\ref{eq:realform}, it is known that
\begin{equation}
\label{eq:dummy2nd_app}
    \sket{\ddot{\varphi}(t)} = -\bcal{K}(t)\sket{\varphi(t)}\,,
\end{equation}

For a digital classical computer however, it is necessary to turn higher order ODEs into $1$-st order ODEs. Therefore, we turn Eq.~\ref{eq:dummy2nd_app} into 

\begin{equation}
\label{eq:dummy1st_app}
    \sket{\dot{\Phi}\pars{t}} = -\bm{\mathrm{X}}(t)\sket{\Phi\pars{t}}\,,
\end{equation}
where 
\begin{equation}
\bm{\mathrm{X}}(t) = \begin{bmatrix}
\mathbf{0} & -\mathbf{I} \\
\bcal{K}(t)& \mathbf{0} \\
\end{bmatrix} 
\quad\mathrm{and}\quad
\sket{\dot{\Phi}(t)} = \begin{bmatrix}
\sket{\varphi(t)} \\
\sket{\dot{\varphi}(t)}
\end{bmatrix}
\,.
\end{equation}

With the initial conditions, one can solve Eq.~\ref{eq:dummy1st_app} and get the solution of $\sket{\varphi(t)}$ from the first half of the $\sket{\Phi\pars{t}}$ vector.

\section{Physical analogy}
\label{app:physics}
We suggest how to possibly construct some physical computing systems for the purpose of quantum simulation in Table~\ref{tab:physics}. 

\begin{table*}[htb!]
\caption{\label{tab:physics}Relations between different physical computing devices for analog classical simulation.}
\begin{tabular}{lllll}
 &  &   &  &  \\ 
 & Mechanical & Electrical  & Neuromorphic~\cite{Abernot} &  \\ \hline
$\bcal{K}(t)$ & Stiffness matrix & Capacitance matrix &  Neuron connectivity matrix &  \\ \hline
Analog unit & Mass displacement & Electrical charge& Neuron action potential \\ \hline
\end{tabular}
\end{table*}

Note that the analogy is not unique. For example, there exists multiple other mechanical–electrical analogies~\cite{firestone_new_1933}.

\end{document}